\pgfplotsset{compat=1.16}
\newtheorem{theorem}{Theorem}
\newcommand{\yimin}[1]{{#1}}
\newcommand{\kdd}[1]{{#1}}
\newcommand{\eat}[1]{} 
\newcommand{\ie}{{i.e.},\xspace}
\newcommand{\eg}{{e.g.},\xspace}
\newcommand{\stitle}[1]{\noindent{\bf #1.\/}}
\newcommand{\nei}[1]{N(#1)\xspace}
\newcommand{\G}{\mathsf{G}\xspace} 
\newcommand{\U}{U\xspace} 
\newcommand{\Ul}{U_s\xspace} 
\newcommand{\Uu}{U_r\xspace} 
\newcommand{\V}{V\xspace} 
\newcommand{\PP}{R\xspace} 
\newcommand{\prs}{r\xspace} 
\newcommand{\E}{E\xspace} 
\newcommand{\Pa}{\mathbf{\Phi}\xspace} 
\newcommand{\Deg}{d\xspace} 
\newcommand{\B}{\tau\xspace} 
\newcommand{\score}{s\xspace} 
\newcommand{\Q}{\hat{Q}\xspace} 
\newcommand{\Qu}{Q\xspace} 
\newcommand{\Mm}{\mathbf{M}\xspace} 
\newcommand{\Qm}{\mathbf{P}\xspace} 
\newcommand{\Qmm}{\mathbf{P'}\xspace} 
\newcommand{\stp}{\ell\xspace} 
\newcommand{\Stp}{\hat{\ell}\xspace} 
\newcommand{\Lbl}{\mathbf{L}\xspace} 
\newcommand{\tp}{\pi\xspace} 
\newcommand{\Aff}{\mathbf{\Psi}} 
\newcommand{\Att}{\mathbf{\Pi}} 
\newcommand{\err}{\epsilon\xspace} 
\newcommand{\s}{\hat{\Aff}\xspace} 
\newcommand{\p}{\mathbf{q}\xspace} 
\newcommand{\w}{w\xspace} 
\newcommand{\ww}{\hat{w}\xspace}
\newcommand{\Err}{E^*\xspace} 
\newcommand{\NN}{N\xspace} 
\newcommand{\Gtri}{\mathsf{G}'\xspace} 
\newcommand{\h}{h\xspace} 
\newcommand{\Etri}{E'\xspace} 
\newcommand{\dms}{d\xspace} 
\newcommand{\afff}{\bar{\psi}\xspace} 
\newcommand{\A}{\mathbf{A}\xspace} 
\newcommand{\X}{\mathbf{X}\xspace} 
\newcommand{\Z}{\mathbf{Z}\xspace} 
\newcommand{\Atri}{\mathbf{A'}\xspace}
\newcommand{\Xtri}{\mathbf{X'}\xspace}
\newcommand{\Ztri}{\mathbf{Z'}\xspace}
\newcommand{\Ntri}{N'\xspace}
\newcommand{\argmax}[1]{{\operatorname{arg}\,\operatorname{max}}_{#1}\;}
\newcommand{\ours}{\texttt{GPLR}\xspace}
\newcommand{\sample}{\texttt{DUSample}\xspace}
\newcommand{\rw}{\texttt{AffinityCompute}\xspace}
\newcommand{\LLM}{\texttt{LLMAnswer}\xspace}
\newcommand{\rrw}{\texttt{RevAff}\xspace}
\newcommand{\PQ}{\texttt{PQ}\xspace} 
\newcommand{\lgcntri}{LGCN3\xspace}
\newcommand{\afdtri}{A-LGCN3\xspace}
\newcommand{\mba}{OnlineRetail\xspace}
\newcommand{\ins}{Instacart\xspace}
\newcommand{\insfull}{Instacart Full\xspace}
\newenvironment{customlegend}[1][]{%
    \begingroup
    \csname pgfplots@init@cleared@structures\endcsname
    \pgfplotsset{#1}%
}{%
    \csname pgfplots@createlegend\endcsname
    \endgroup
}%
\def\addlegendimage{\csname pgfplots@addlegendimage\endcsname}
\newcommand{\trref}[1]{Appendix~{#1}}
\begin{document}

\title{You Are What You Bought: Generating Customer Personas for E-commerce Applications}

\author{Yimin Shi}
\email{yiminshi@u.nus.edu}
\orcid{0000-0003-3375-4602}
\affiliation{%
  \institution{National University of Singapore}
  \city{Singapore}
  \country{Singapore}
}

\author{Yang Fei}
\email{yfei11@u.nus.edu}
\orcid{0009-0006-2001-2561}
\affiliation{%
  \institution{National University of Singapore}
  \city{Singapore}
  \country{Singapore}
}

\author{Shiqi Zhang}
\email{shiqi@pyrowis.ai}
\orcid{0000-0002-7155-9579}
\authornote{Shiqi Zhang is the corresponding author.}
\affiliation{%
  \institution{National University of Singapore}
  \institution{PyroWis AI}
  \city{Singapore}
  \country{Singapore}
}

\author{Haixun Wang}
\email{haixun@gmail.com}
\orcid{0000-0002-1378-4241}
\affiliation{%
  \institution{EvenUp}
  \city{San Francisco}
  \country{United States}}

\author{Xiaokui Xiao}
\email{xkxiao@nus.edu.sg}
\orcid{0000-0003-0914-4580}
\affiliation{%
  \institution{National University of Singapore}
  \city{Singapore}
  \country{Singapore}
}

\renewcommand{\shortauthors}{Yimin Shi, Yang Fei, Shiqi Zhang, Haixun Wang and Xiaokui Xiao}

\eat{
\begin{abstract}
    Persona, due to its unique business significance, has received increasing attention in the e-commerce area. The emergence of Large Language Models (LLM) is bringing new opportunities for the design of persona-based recommender systems. To our surprise, LLM-assigned user persona labels can easily improve major accuracy metrics such as F1-score@K and NDCG@K by 5-15\% when intuitively combined with the state-of-the-art graph convolution-based recommendation methods. However, due to the high costs of LLMs, labeling personas for all users can lead to significant economic expenses in practice. In our work, we further propose an effective algorithm to predict unlabeled users' persona based on a small subset of label ground truth. The algorithm contains two main steps: (1) diversity and uncertainty-aware user sampling, and (2) a bipartite random walk-based persona prediction algorithm. Our experiments show that applying our predicted persona for unlabeled users can also significantly improve the recommendation accuracy metrics, sometimes even outperforming the fully labeled version.    
\end{abstract}}

\begin{abstract}

    In e-commerce, user representations are essential for various applications. Existing methods often use deep learning techniques to convert customer behaviors into implicit embeddings. However, these embeddings are difficult to \yimin{understand} and integrate with external knowledge, limiting the effectiveness of applications such as customer segmentation, search navigation, and product recommendations. To address this, our paper introduces the concept of the \textit{customer persona}. Condensed from a customer's numerous purchasing histories, a customer persona provides a multi-faceted and \yimin{human-readable} characterization of specific purchase behaviors and preferences, such as \textit{Busy Parents} or \textit{Bargain Hunters}.
    
    This work then focuses on representing each customer by multiple personas from a predefined set, achieving readable and informative explicit user representations. To this end, we propose an effective and efficient solution \ours. To ensure effectiveness, \ours leverages pre-trained LLMs and few-shot learning to infer personas for customers. To reduce overhead, \ours applies LLM-based labeling to only a fraction of users and utilizes a random walk technique to predict personas for the remaining customers. To further enhance efficiency, we propose an approximate solution called \rrw for this random walk-based computation. \rrw provides an absolute error $\err$ guarantee while improving the time complexity of the exact solution by a factor of at least $O\left(\frac{\err\cdot|\E|\NN}{|E|+\NN\log\NN}\right)$, where $N$ represents the number of customers and products, and $E$ represents the interactions between them. \yimin{We evaluate the performance of our persona-based representation in terms of accuracy and robustness for recommendation and customer segmentation tasks using three real-world e-commerce datasets.} Most notably, we find that integrating customer persona representations improves the state-of-the-art graph convolution-based recommendation model by up to 12\% in terms of NDCG@K and F1-Score@K.

\end{abstract}

\begin{CCSXML}
<ccs2012>
   <concept>
       <concept_id>10002951.10003317.10003347.10003350</concept_id>
       <concept_desc>Information systems~Recommender systems</concept_desc>
       <concept_significance>500</concept_significance>
       </concept>
   <concept>
       <concept_id>10002951.10003227.10003351</concept_id>
       <concept_desc>Information systems~Data mining</concept_desc>
       <concept_significance>500</concept_significance>
       </concept>
   <concept>
       <concept_id>10010147.10010178.10010179.10010182</concept_id>
       <concept_desc>Computing methodologies~Natural language generation</concept_desc>
       <concept_significance>300</concept_significance>
       </concept>
    <concept>
        <concept_id>10002950.10003624.10003633.10010917</concept_id>
        <concept_desc>Mathematics of computing~Graph algorithms</concept_desc>
        <concept_significance>300</concept_significance>
        </concept>
    <concept>
        <concept_id>10002950.10003624.10003633.10010918</concept_id>
        <concept_desc>Mathematics of computing~Approximation algorithms</concept_desc>
        <concept_significance>300</concept_significance>
        </concept>
    <concept>
        <concept_id>10002951.10003260.10003261.10003271</concept_id>
        <concept_desc>Information systems~Personalization</concept_desc>
        <concept_significance>300</concept_significance>
        </concept>
    <concept>
        <concept_id>10002951.10003260.10003282.10003550</concept_id>
        <concept_desc>Information systems~Electronic commerce</concept_desc>
        <concept_significance>500</concept_significance>
        </concept>
 </ccs2012>
\end{CCSXML}

\ccsdesc[500]{Information systems~Electronic commerce}
\ccsdesc[500]{Information systems~Recommender systems}
\ccsdesc[300]{Information systems~Data mining}
\ccsdesc[300]{Information systems~Personalization}
\ccsdesc[300]{Computing methodologies~Natural language generation}
\ccsdesc[300]{Mathematics of computing~Graph algorithms}
\ccsdesc[300]{Mathematics of computing~Approximation algorithms}

\keywords{Persona, Large Language Model, Random Walk, Recommendation}


\maketitle

\section{Introduction}

Large Language Models (LLMs), such as GPT-4~\cite{achiam2023gpt}, hold substantial potential for various e-commerce applications. For example, in digital bookstores, LLMs can recommend books tailored to a customer’s specific interests in various genres and authors, while also explaining the rationale behind each recommendation. Furthermore, LLMs can comprehend literary nuances and character relationships within each book, aiding in the construction of a knowledge base. When combined with retrieval-augmented generation techniques, LLMs can leverage this knowledge base to deliver more personalized and detailed product QA for the books.

The aforementioned applications of LLMs, however, require {\it explicit} customer representations that characterize the customers in natural language (\eg ``Bargain Hunter,'' ``Audiobook Listeners''), so as to provide LLMs with useful information regarding potential customer preferences. In contrast, existing e-commerce recommendation methods (\eg ~\citeN{wang2021ebay, he2020lightgcn, yu2022low}) predominantly rely on {\it implicit} customer representations. In particular, these methods map each customer's personal data (\eg browsing histories, clicks, and purchases) to a fixed-size numeric vector, which cannot be easily \yimin{understood} by LLMs or even human experts.
One possible solution is to conduct user surveys to gather explicit user preferences, but such surveys are time-consuming and unrepeatable, and often result in incomplete data. Some recent studies~\citeN{yu2023folkscope,yu2024cosmo, li2015mining} propose representing customers based on their preferred product categories. Nevertheless, these representations are insufficiently informative as they fail to capture detailed purchase behaviors.

To address the above deficiencies, we introduce the concept of \textit{Personas} as a new dimension to describe customers' purchase behaviors and product preferences.
For instance, on an e-commerce platform, customer personas may include \textit{Bargain Hunter}, \textit{Brand Loyalist}, \textit{Health Enthusiast}, and \textit{Tech Savvy}, among others. Each persona is accompanied by a detailed definition that provides insights into a customer's purchasing characterization. This allows each persona to be multi-faceted and not confined to product category granularity. For example, a \textit{Health Enthusiast}, whose core pursuit is achieving and maintaining good health, may have a subtle preference toward different products in the \textit{fruits} category: kiwis are favored for their rich vitamin content, while high-calorie options such as bananas are avoided.
Unlike event-driven purchase intentions that capture a customer's immediate motivations~\citeN{yu2023folkscope,yu2024cosmo}, personas are more durable and reusable, as they often relate to personal interests, habits, and lifestyles that do not frequently change.
Therefore, assigning each customer multiple personas based on her historical purchase behaviors provides a powerful explicit representation that enhances the capabilities of LLMs in e-commerce.

To generate personas for each customer, we propose utilizing LLMs themselves. By instructing LLMs to summarize the customers' purchase behaviors using natural language, we can effectively map each customer to her persona labels, leveraging the LLMs' knowledge and reasoning capabilities.
However, scalability presents a substantial challenge if we are to repeatedly \kdd{label} numerous users on a real-world e-commerce platform, in response to the continuous update of user engagement data. For an e-commerce platform with 10 million active customers, relabeling all users monthly based on their dynamic purchase statistics with GPT-4 costs approximately 2.4 million dollars per year.

To address this scalability challenge, we propose an efficient and cost-effective solution, dubbed \ours, for generating customer persona representations on real-world e-commerce platforms.
\ours first leverages LLMs to label a small, carefully selected user set within a limited budget.
LLMs are instructed to assign personas for each customer from a platform-specific prototype persona set.
Subsequently, \ours utilizes random walks to analyze purchase behavior similarities among users, based on which it infers the personas for unlabeled users through a weighted aggregation of prototype persona labels. We further propose \rrw to reduce the time complexity of random walk-based computations by approximating exact random walk probabilities with a theoretical error tolerance.

Given that personalized product recommendation is a key downstream application of the explicit customer representation, we describe a methodology for integrating customer personas as an additional partition into existing graph convolution-based methods in Section \ref{sec:ProductRecommendation}.
Based on this methodology, we integrate personas with two state-of-the-art recommendation models~\cite{yu2022low, wu2024afdgcf}, and extensively compare them with six cutting-edge recommendation models. Specifically, we find that persona-enhanced models can outperform competitors on three real-world datasets with up to 13 million user-item interactions. Notably, personas can improve upon the original model by up to 12\% in NDCG@K and F1-Score@K.
\yimin{Besides recommendation, we further demonstrate that personas outperform existing explicit customer representations in customer segmentation in terms of robustness and cluster quality.}
In addition, extensive experiments show that \rrw can process large-scale datasets in tens of seconds, achieving empirical errors significantly lower than the theoretical bound.

To summarize, we make the following contributions:
\begin{itemize}[topsep=2pt,itemsep=1pt,parsep=0pt,partopsep=0pt,leftmargin=11pt]
    \item We propose representing customers through personas, enabling the integration of external agents' knowledge and reasoning.
    \item We propose \ours, which significantly reduces the cost of using LLMs when generating effective customer personas.
    \item We enhance the scalability of \ours by proposing \rrw to approximate its random walk-related computations efficiently.
    \item We present a methodology for seamlessly combining personas with graph convolution-based recommendation models.
    \item We conduct comprehensive experiments to show the superiority of the proposed solutions in \yimin{product recommendation and customer segmentation}.
\end{itemize}
\vspace{-2mm}

\section{Problem Formulation}
This section first introduces the bipartite graph representation of customer purchase histories, and then proposes the concept of personas and defines our representation generation problem, after which we discuss three important downstream applications.

\vspace{-1mm}
\subsection{Preliminaries}\label{sec:notations}
\stitle{Purchase histories}
{We represent purchase histories of users on an e-commerce platform as a bipartite graph $\G=(\U, \V, \E)$, where $\U$ and $\V$ are sets of nodes representing users and products on the e-commerce platform, respectively.} The edge set $E$ contains the purchase history $e_{ij}=(u_i, v_j)\in \E$, representing that the user $u_i\in U$ purchased the product $v_j\in V$ in the past. For an edge $e_{ij}=(u_i, v_j)\in \E$, we say $u_i$ and $v_j$ are neighbors. We denote $\nei{u_i}$ (resp.\ $\nei{v_j}$) as the neighbor set of $u_i$ (resp.\ $v_j$) and represent its degree as $\Deg_i$ (resp.\ $\Deg_j$).

\stitle{Notations}
Throughout this paper, we denote matrices in bold uppercase, \eg $\Mm$. We use $\Mm[v_i]$ to denote the $v_i$-th row vector of $\Mm$, and $\Mm[:,v_j]$ to denote the $v_j$-th column vector of $\Mm$. In addition, we use $\Mm[v_i,v_j]$ to denote the element at the $v_i$-th row and $v_j$-th column of $\Mm$.
Given an index set $S$, we let $\Mm[S]$ (resp.\ $\Mm[:,S]$) be the matrix block of $\Mm$ that contains the row (resp.\ column) vectors of the indices in $S$. Table~\ref{tab:notations} lists the frequently used notations.

\vspace{-1mm}
\subsection{Customer Personas}\label{sec:persona}
The concept persona refers to the social face that an individual presents to the world from a psychological perspective~\cite{jung2014two}.
In this work, we introduce the concept of \textit{customer persona} for customers on e-commerce platforms. Specifically, each customer persona summarizes a characterization of a customer's specific purchase behaviors and product preferences, such as \textit{Bargain Hunters}, \textit{Health Enthusiasts}, \textit{Tech Savvy}, and \textit{Busy Parents}. Additionally, each customer persona is associated with a detailed explanation. For example, the \textit{Busy Parents} persona describes an individual who frequently purchases kid-friendly products, diapers, baby food, and other family necessities, often seeking convenience through pre-made meals and time-saving products. The {\textit{Bargain Hunters} represents customers who are always looking for the best deals and discounts, buying sale products with coupons and purchasing in bulk to save money.} Advantages of the customer persona are three-fold.

\stitle{Informativeness}
In contrast to conventional features such as demographic information, the customer persona demonstrates high informativeness by seamlessly integrating multi-dimensional information from the user's purchase preference. For example, the Busy Parent persona captures the information from the perspectives of baby needs, household, and convenience.

\yimin{\stitle{Readability} Although deep learning techniques can condense information into embeddings, such representations often lack human-readable dimensions that convey clear and understandable messages. In contrast, customer personas, along with their corresponding detailed explanations, are easy for both humans and LLMs to read and comprehend.
}

\stitle{Robustness}
The customer persona is tied to a user's values, habits, and living conditions, which do not change frequently. This makes it a suitable choice as a relatively long-standing label for customers.

Notably, in this work, we assume that the e-commerce platform has predefined a set of customer personas, denoted as $\PP$. 
To establish a set of comprehensive and representative customer personas, the platform can leverage LLMs associated with prompt engineering techniques\footnote{All related prompts in this work will be provided in \trref{\ref{app:b}}.}.
Specifically, the service provider supplies the LLM with a portion of real-world customer purchase histories to generate an initial set of persona candidates, which are further refined and deduplicated by the LLM. Additionally, the predefined personas can be validated based on customer feedback on the platform, finally yielding a proprietary persona set for the platform.

\begin{table}[!t]
\centering
\renewcommand{\arraystretch}{1.1}
\begin{small}
\caption{Frequently used notations.}\vspace{-3mm} \label{tab:notations}
\resizebox{\columnwidth}{!}{
\begin{tabular}{rp{2.7in}}	
    \toprule
    \bf Notation & \bf Description \\
    \midrule
    $\G = (\U, \V, \E)$ & A bipartite graph with user node set $\U$, item node set $\V$ and historical purchase edge set $E$.\\
    $\PP$ & Predefined persona set.\\
    $\Uu, \Ul$ & Unlabeled user set and prototype user set.\\
    $\B$ & LLM query budgets and $\B=|\Uu|$.\\
    $\Pa$ & Customer representation matrix by personas.\\
    $\Aff$ & The user-persona affinity matrix in Eq.~\eqref{eq:aff}.\\
    $\Stp$ & Random walk step in Eq.~\eqref{eq:exactaff}. \\
    $\err$ & The absolute error in Definition~\ref{def:e-approximate}.\\
    
    \bottomrule
\end{tabular}
}
\end{small}
\vspace{-2mm}
\end{table}

\subsection{{Customer Representation by Personas}}
{To personalize each customer with better informativeness, \yimin{human-readability,} and robustness, we target to assign each customer multiple personas from the predefined persona set that align with her historical purchase behaviors. We formalize the major problem to solve in this paper as follows.}

\stitle{Goal}
Given a bipartite graph $\G = (\U, \V, \E)$ representing user purchase histories and a predefined customer persona set $\PP$, this work focuses on finding a customer representation matrix $\Pa\in \{0,1\}^{|\U|\times |\PP|}$. $\Pa[u_i]$ represents the persona representation of user $u_i$, where $\Pa[u_i, \prs_k]=1$ indicates that the customer $u_i$ possesses the persona $\prs_k$; otherwise, $\Pa[u_i, \prs_k]=0$. 
In other words, the core of this work lies in effectively and efficiently assigning personas to each customer.

\subsection{Downstream Applications}\label{sec:app}
The persona-based representation can be applied across a range of downstream tasks. For service providers and retailers, personas offer a new and informative perspective for customer segmentation. For the customer experience, the personas not only serve as filter labels in searches but also facilitate product recommendations. In what follows, we will illustrate these application scenarios in detail.

\stitle{Customer segmentation~\cite{wan2022fast,pai2022unsupervised}}
Customer segmentation is the process of dividing customers into distinct groups based on similarities in specific attributes such as demographic features and interests.
The proposed personas offer a valuable method for categorizing customers according to their spending habits and purchasing behavior patterns, thereby enhancing data analytics in advertising, marketing, and customer relationship management. Furthermore, the customer persona set $\PP$ provides $2^{|\PP|}$ distinct combinations of persona for intersecting or uniting customer groups, allowing analytics to vary from the coarsest to the finest granularity. In addition to enhancing analytics, this persona-based approach also facilitates the design of personalized strategies to increase customer engagement and business profitability.

\stitle{Customer-centric search navigation~\cite{yu2023folkscope, yu2024cosmo}}
{On the e-commerce platform, customers search for products by entering concepts or keywords. In response, the platform returns a list of relevant products along with a navigation bar that enables customers to explore these products based on a taxonomy focused on the catalog and metadata of the products. This taxonomy can be enhanced by incorporating the proposed personas as an additional layer at the highest level, creating a new taxonomy that integrates both customer-centric persona concepts and the product-focused hierarchy. For example, the persona Busy Parent could serve as a supernode in the new taxonomy, encompassing three departments: Baby, Health \& Household, and Home \& Kitchen, based on the Amazon catalog. This structure allows the taxonomy to reflect both customer preferences and product categories, facilitating a more intuitive and targeted shopping experience.}

\stitle{Product recommendation~\cite{he2020lightgcn,wu2024afdgcf}}
Customer representation by personas can enhance the effectiveness of personalized product recommendations.
Specifically, these representations can serve as $|\PP|$ additional input features for well-adopted recommendation models, such as those based on graph neural networks. In contrast to merely considering historical transactions, incorporating personas provides more aggregated insights w.r.t.\ each customer.
Moreover, as comprehensible features, they can be integrated with external knowledge, such as the favorite products of each persona, thereby improving the performance of recommender systems.
In addition, personas can also alleviate the cold-starting problem. To explain, new customers can select personas that fit their own profiles when they first join, and retailers of new products can choose personas that align with the target customers for these products. Hence, new items can be recommended to users with matching personas, and vice versa.

\section{Related Work}
In this section, we briefly review existing works focused on generating implicit and explicit user representations.

Most previous methods represent users as low-dimensional embeddings optimized according to specific objective functions (see Ref.~\cite{khoshraftar2024survey} for the latest survey).
Among them, numerous solutions~\cite{koren2009mf, he2017ncf, he2020lightgcn, wang2021ebay, zhang2024scaling,song2023xgcn,chen2023clustered,zhang2024linear,wu2024afdgcf} are based on collaborative filtering, ranging from traditional matrix factorization solution~\cite{koren2009mf} to the modern approaches based on graph neural networks~\cite{he2020lightgcn,zhang2024scaling,song2023xgcn,berg2017graph,choudhary2024interpretable,yu2020graph,yu2022low,zhang2024linear,wu2024afdgcf}. However, these embeddings are implicit representations, meaningful only within the recommender system, and difficult for humans or LLMs to understand. Similar situations happen in other personalized e-commerce applications, such as customer segmentation~\cite{wan2022fast,pai2022unsupervised} and product search~\cite{ai2017learning, ai2019zero, yu2023folkscope, yu2024cosmo,choudhary2024interpretable}. In addition, although some works~\cite{barkan2020explainable, banerjee2016capres, deng2022toward} utilize the concept of personas in their modeling process, they still represent each persona as a set of implicit features instead of a human-readable definition.

Some works attempt explicit customer representation by utilizing the concept of persona, but fundamentally differ from our approach. For example, \citet{li2015mining} posteriorly assigns each user with a single persona label based on product categories rather than customer-centric characterization.
\citet{elad2019learning} models customer personalities that are not directly related to e-commerce.
Additionally, another line of related works, such as FolkScope~\cite{yu2023folkscope} and COSMO~\cite{yu2024cosmo}, leverage LLMs to construct a commonsense knowledge graph for applications in the e-commerce field. In this graph, the head entity represents a pair of co-buy or view-buy items from the same category, while the tail entity explains the possible reason for this co-occurrence. In other words, this knowledge graph can currently serve only as a textual description for item pairs from the same category. However, it remains an open question how to convert this into an explicit and extensive representation for each item and, further, for each customer.

\section{Solution Framework}
{In this section, we introduce the framework of \ours, which Generates customers' Persona representation matrix $\Pa$ through leveraging Large language models and Random walk-based affinities.}

\subsection{Main Idea}\label{sec:main-idea}
Benefiting from the finding that LLMs like GPT-4 are reliable in understanding and answering questions related to the shopping domain~\cite{sun2024large}, a straightforward and effective approach to generate $\Pa[u_i]$ for a user $u_i$ is to serialize all purchase histories of $u_i$ and ask a pre-trained LLM which personas $u_i$ belongs to.
However, due to their large-scale model structure and autoregressive generation mechanism, LLMs incur a prohibitively immense computational overhead when generating personas for millions of customers on real-world e-commerce platforms, regardless of whether using online APIs or deploying locally. Furthermore, the need to repeatedly relabel each customer due to the dynamic nature of their personas further amplifies this problem.

To be more cost-effective, we propose the solution \ours, whose idea is to sample a small fraction of users $\Ul$ as \textit{prototype users} and generate their customer representations using LLMs, then infer personas for remaining unlabelled users, $\U \setminus \Ul$, based on their proximity to the prototype users within $\G$.
The pseudocode of \ours is illustrated in Algorithm~\ref{alg:overview}. In particular, \ours takes as inputs the purchase histories $\G$, the persona set $\PP$, a budget constant $\B$ indicating the cardinality of the sampled user set $|\Ul|$, the number of iterations $T$, and a cutoff constant $k$. {Algorithm~\ref{alg:overview} initializes $\Pa$ and $\Aff$ to zero matrices and $\Ul$ to an empty set. $\Aff \in \mathbb{R}^{|U| \times |\PP|}$ is called the \textit{user-persona affinity} matrix, which is the core of the subsequent subroutines. (For ease of presentation, we defer the formal definition and computation of $\Aff$ to Section~\ref{sec:PersonaApproximation}.) Intuitively, $\Aff[u_i, r_j]$ is a weighted aggregation of the label associated with $r_j$ w.r.t.\ prototype users in $u_i$'s vicinity.} In Lines 2-6, Algorithm~\ref{alg:overview} repeats over $T$ times. In each iteration $t$, it invokes a sampling strategy called \sample to return a set of users $\U_t$ with $|\U_t|=\B/T$, followed by a subroutine called \LLM that labels $\U_t$ to personas in $\PP$ based on LLMs. 
{After that, \ours updates the prototype user set $\Ul$ and recomputes $\Aff$ based thereon by invoking \rw. In Line 7, for each remaining user $u_i\in \U \setminus \Ul$, we select the $k$ personas with the largest affinity scores from the latest $\Aff[u_i]$ as the persona representation of $u_i$.}

\subsection{Subroutine Descriptions}\label{sec:overview-subsoutine}
In what follows, we elaborate on the three subroutines \sample, \LLM, and \rw in \ours. 

\begin{algorithm}[t]
\caption{\ours}\label{alg:overview}
\KwIn{Purchase histories $\G$, persona set $\PP$, budget constant $\B$, number of iterations $T$ and a cutoff constant $k$}
\KwOut{Customer representation by personas $\Pa$}
$\Pa,\Aff \gets \{0\}^{|\U|\times |\PP|}$\ ; $\Ul \gets \emptyset$;\\
\For{$t\gets 1$ \textbf{to} $T$}{
$\U_t \gets \sample(\G, \Pa, \Aff, \B/T, t)$;\\
$\Pa[\U_t] \gets \LLM(\U_t, \PP)$;\\
$\Ul \gets \Ul\cup\U_t$;\\
$\Aff \gets \rw(\G, \Pa, \Ul)$;\\
}
For each $u_i\in \U \setminus \Ul$, set each $\Pa[u_i,r_w]$ to 1 where $r_w$ is among the top-$k$ personas of $\Aff[u_i]$;\\
\Return{$\Pa$;}
\end{algorithm}

\stitle{\sample}
We first introduce the Diversity-Uncertainty (\texttt{DU}) sampling approach, dubbed as $\sample$, {which considers both persona diversity and user uncertainty based on the current persona affinity scores,} with the following motivations. From the perspective of diversity, we observe that the distribution of persona labels is often severely biased in real-world datasets, \ie the majority of users are associated with a small set of dominating personas. For example, in the \ins dataset where customer representations are labeled from 51 expert-designed personas using LLM, the two most popular personas are assigned to $76.2\%$ and $75.8\%$ of users, respectively, while the 20 least common personas together cover only 7.4\% of users. In other words, randomly selecting users for the LLM labeling inherits the bias in the collected labels, leading to false positive associations for the remaining users. {To mitigate this bias, we consider labeling users more likely to contain less common personas,} ensuring better diversity in the collected persona labels. Regarding uncertainty, similar to the labeling process in active learning~\cite{ren2021active}, we want to label the users who are more uncertain in the following \rw routine, \ie having higher probabilities of being assigned with wrong persona labels. Instead of postponing the decision-making, directly labeling these users with LLMs increases the accuracy in expectation.

Based on the aforementioned insights, $\sample$ randomly selects $\B/T$ users from $\U$ in the initial step ($t=1$) to form $\U_1$. At each subsequent time step ($t > 1$), we compute the \texttt{DU} score for all $u_i\in \Uu$ as follows, and selects the top-$(\B/T)$ users with the highest \texttt{DU} scores among all unlabeled users.
\begin{equation}\label{eq:duscore} 
\textstyle\score(u_i) = \sum\limits_{\prs_m \in \PP} \Q(\prs_m)\cdot\log \left( \frac{\Q(\prs_m)}{\Qu_i(\prs_m)}\right) -\Qu_i(\prs_m)\cdot\log \Qu_i(\prs_m)
\end{equation}
{In Eq.~\eqref{eq:duscore}, $\Q$ is defined as the persona label distribution in the currently collected $\Pa$ as follows: 
$$ \textstyle \Q(\prs_m) = \frac{\sum\limits_{u_i\in \Ul} \Pa[u_i,\prs_m]}{\sum\limits_{u_i\in \Ul,\prs_n\in \PP}\Pa[u_i, \prs_n]},$$
and $\Qu_i$ represents the normalized user-persona affinity distribution of user $u_i$ in the current iteration, defined as:
$$ \textstyle \Qu_i(\prs_m) = \frac{\Aff[u_i, \prs_m]}{\sum\limits_{\prs_n\in \PP}\Aff[u_i,\prs_n]}.$$
The first term of \texttt{DU} score is actually the KL divergence between $\Q$ and $\Qu_i$. Selecting users with higher divergence values is expected to enhance the persona diversity in the collected prototype labels. The intuition is that if $u_i$'s user-persona affinity distribution significantly differs from the currently collected persona distribution, it indicates that $u_i$'s purchase behavior is more likely to differ from the majority of users. The second term represents the entropy of $\Qu_i$. Labeling users with high entropy is expected to improve the accuracy of the persona assignment. The intuition here is that a lower entropy suggests that $u_i$'s affinities on different personas are similar, indicating that the persona information about this user is still ambiguous and has high uncertainty in determining her labels.}

\stitle{\LLM}
Subsequently, we send these users' purchase logs and predefined persona set $\PP$ to the LLM and update its association results in $\Pa$. 
We denote this approach as $\LLM$, which takes as inputs the bipartite graph $\G$ and the persona set $\PP$.
Specifically, we first employ serialization to transform each customer's purchase logs into a natural language format. We then input the transformed result, a predefined persona set $\PP$, instruction prompts, and in-prompt examples into the LLM. These in-prompt examples are defined to improve the generation quality and specify the output format. With few-shot learning, the pre-trained LLM can leverage its extensive real-world knowledge and reasoning abilities to associate the user with the relevant personas by referencing her purchased product names and amounts. 

\stitle{\rw}
{Due to the update of prototype user set $\Ul$ and their persona representations $\Pa[\Ul]$, we conduct the random walk-based solution $\rw$ to recompute the user-persona affinity matrix $\Aff$. 
Based on the homophily principle~\cite{easley2010networks}, we assume that a user is likely to share identical personas with other users who exhibit similar purchasing behaviors, referred to as \textit{neighbor} users. To compute $\Aff$, for each user $u_i$, our main idea is to identify the labeled neighbor users in the vicinity and generate $\Aff[u_i]$ according to the aggregation of their persona representations.}
\section{User-Persona Affinity Computation}\label{sec:PersonaApproximation}
This section first describes the original computation process of the user-persona affinity matrix $\Aff$. It then introduces \rrw as an efficient approximate solution based on reverse updating.

\subsection{Exact Solution}
{
To capture structural closeness between users, \kdd{we define the \textit{attention} as the mean of probabilities of a random walk, starting from $u_i$ reaching $u_k$ within $\Stp$-steps or fewer, \ie $\frac{1}{\Stp}\sum_{\stp\leq\Stp}\tp_{\stp}(u_i, u_k)$.} In particular, for a given $u_i$, an $\stp$-step random walk starts from $u_i$, and at each step ($\leq\stp$), it first navigates to a random product purchased by the current user node and then randomly jumps to a user node that purchased this product. \kdd{The attention matrix $\Att\in \mathbb{R}^{|U|\times|U|}$ is
\begin{equation}\label{eq:exactaff}
    \textstyle \Att= \frac{1}{\Stp}\sum\limits_{\stp\leq\Stp}\Att_{\stp}=\frac{1}{\Stp}\sum\limits_{\stp\leq\Stp}\mathbf{I}\cdot (\Qm \cdot \Qmm)^{\stp},
\end{equation}
where $\Att_{\stp}[u_i,u_j]=\tp_{\stp}(u_i, u_j)$ for every $u_i,u_j\in \U$.}
In Eq.~\eqref{eq:exactaff}, $\Qm \in \mathbb{R}^{|U|\times|V|}\ $ and $\Qmm \in \mathbb{R}^{|V|\times|U|}$ are two transition matrices of $\G$, where $\Qm[u_i,v_j]=\frac{1}{|\nei{u_i}|}$ and $\Qmm[v_j,u_i] = \frac{1}{|\nei{v_j}|}$ if $e_{ij}\in \E$, and $\Qm[u_i,v_j]=\Qmm[v_j,u_i]=0$ otherwise.}

{After that, we define a matrix $\Lbl \in \mathbb{R}^{|\U|\times|\PP|}$ to measure the relative importance of each persona w.r.t.\ each user node. Specifically,
\begin{equation}\label{eq:labelmatrix} 
  \textstyle \Lbl[u_i,\prs_m]=\frac{c_m\cdot \Pa[u_i, \prs_m]}{\sum\limits_{\prs_n\in \PP}\Pa[u_i, \prs_n]},\quad c_m = \left( \frac{\min\limits_{\prs_n\in \PP} \Q(\prs_n)}{\Q(\prs_m)} \right)^{\beta}  
\end{equation}
if $u_i\in \Ul$ is a prototype user and $\Lbl[u_i,\prs_m]=0$ otherwise. In Eq.~\eqref{eq:labelmatrix}, a coefficient $c_m$ and a hyperparameter $\beta$ are considered to deal with the persona distribution bias, such that the signal of minor personas can be emphasized. Finally, the user-persona affinity matrix $\Aff$ is defined as}:
\begin{equation}\label{eq:aff}
    \textstyle\Aff = \mathbf{\Pi} \cdot \Lbl,
\end{equation}
where $\Aff[u_i, \prs_m]$ represents the user-persona affinity from user $u_i\in \U$ to persona $r_m\in \PP$.

{Based on the aforementioned definition, the exact solution for computing the matrix $\Aff$ involves first calculating the matrices $\Att$ and $\Lbl$, and then determining $\Aff$ according to Eq.~\eqref{eq:aff}. Within $\Aff$, a submatrix $\Aff[\U\setminus\Ul]$ is utilized for Algorithm \ref{alg:overview}. The time complexity of this exact computation is as follows \footnote{\yimin{Detailed proofs will be provided in \trref{\ref{app:a}}.}}.

\begin{theorem}\label{thm:1} 
The time complexity for computing the exact $\Aff$ by Eq.~\eqref{eq:aff} is $O(|E|\cdot|\U| + (\Stp-1)|\U|^3 + |\PP|\cdot|\U|^2)$.
\end{theorem}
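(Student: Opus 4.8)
The plan is to split the evaluation of Eq.~\eqref{eq:aff} into three stages that can be analyzed independently and then summed: (i) assembling the label matrix $\Lbl$ of Eq.~\eqref{eq:labelmatrix}; (ii) assembling the attention matrix $\Att$ of Eq.~\eqref{eq:exactaff}; and (iii) forming the final product $\Aff=\Att\cdot\Lbl$ of Eq.~\eqref{eq:aff}. I would show that stage (i) is asymptotically dominated, while stages (iii) and (ii) respectively contribute the $|\PP|\cdot|\U|^2$ term and the remaining two terms of the claimed bound.

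Stage (i) is the cheap one. Computing the distribution $\Q$ requires a single pass over the nonzero entries of $\Pa[\Ul]$, after which the minimum $\min_{\prs_n\in\PP}\Q(\prs_n)$ and all coefficients $c_m$ cost $O(|\PP|)$, and filling each entry of $\Lbl$ is then $O(1)$; hence the whole stage runs in $O(|\Ul|\cdot|\PP|)$ time. Since $|\Ul|\le|\U|$, this is subsumed by the $|\PP|\cdot|\U|^2$ term. Stage (iii) is a dense matrix product: $\Att$ is $|\U|\times|\U|$ and $\Lbl$ is $|\U|\times|\PP|$, so $\Aff=\Att\cdot\Lbl$ takes $O(|\PP|\cdot|\U|^2)$ time, which is exactly the third term.

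The crux is stage (ii). Writing $\Mm=\Qm\cdot\Qmm$, Eq.~\eqref{eq:exactaff} reads $\Att=\frac{1}{\Stp}\sum_{\stp\le\Stp}\Mm^{\stp}$ (the leading $\mathbf I$ is immaterial). I would first compute the single hop $\Mm$ by exploiting the sparsity of the two transition matrices: assembling row $u_i$ amounts to adding, for each product $v_j\in\nei{u_i}$, the scaled sparse row $\Qmm[v_j]$, whose support has size $\Deg_j$. The total work is therefore $\sum_{u_i}\sum_{v_j\in\nei{u_i}}\Deg_j=\sum_{v_j}\Deg_j^2$, and since $\sum_{v_j}\Deg_j^2\le\left(\max_{v_j}\Deg_j\right)\sum_{v_j}\Deg_j\le|\U|\cdot|\E|$, this first hop costs $O(|\E|\cdot|\U|)$, giving the first term (and dominating the $O(|\U|^2)$ needed to initialize the dense $\Mm$). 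Because $\Mm$ is in general dense, the remaining powers are then formed by the recurrence $\Mm^{\stp}=\Mm\cdot\Mm^{\stp-1}$ accumulated into a running sum; each of these $\Stp-1$ steps is a dense $|\U|\times|\U|$ by $|\U|\times|\U|$ multiplication costing $O(|\U|^3)$, for a total of $O((\Stp-1)|\U|^3)$, which is the second term. The final scaling by $\frac{1}{\Stp}$ and the $\Stp$ matrix additions cost only $O(\Stp\cdot|\U|^2)$ and are absorbed. Summing the three stages yields $O(|\E|\cdot|\U|+(\Stp-1)|\U|^3+|\PP|\cdot|\U|^2)$.

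The main obstacle will be the sparse analysis of the first hop: one must argue that $\Mm=\Qm\cdot\Qmm$ is computable in $O(|\E|\cdot|\U|)$ rather than the naive $O(|\U|^3)$, which hinges precisely on the bound $\sum_{v_j}\Deg_j^2\le|\U|\cdot|\E|$, and one must simultaneously justify why this sparsity cannot propagate to the subsequent powers—$\Mm$ is already dense—so that each of the remaining $\Stp-1$ multiplications genuinely incurs the $O(|\U|^3)$ cost.
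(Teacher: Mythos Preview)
Your proposal is correct and follows essentially the same three-stage decomposition as the paper's proof. The only noticeable difference is cosmetic: for the sparse product $\Qm\cdot\Qmm$, the paper bounds the cost per output entry by $O(d_i+d_j)$ and sums over all $(u_i,u_j)\in\U\times\U$, whereas you use a row-accumulation view yielding $\sum_{v_j}\Deg_j^2\le|\U|\cdot|\E|$; both are standard analyses of sparse-times-sparse multiplication and give the identical $O(|\E|\cdot|\U|)$ term.
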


{According to Theorem~\ref{thm:1}, this exact solution fails to handle large $\G$ for two main reasons. \kdd{First, due to the sparsity of $\Lbl$, only attention values to prototype users (\ie $\Att[:, \Ul]$) are utilized in computing $\Aff$, however, $\Ul$ only represents a small fraction of $\U$, \eg $\tau=10\%$, incurring a large number of excessive computations.} Second, due to the high connectivity in $\G$, $\Att_\stp$ is diminishing rapidly as the step $\stp$ increases, resulting in a multitude of tiny attention values that can be disregarded.}

\subsection{Fast Approximation}\label{sec:ReverseRandomWalk}
{To address the efficiency issues mentioned above, we propose an approximation method called \rrw. At a high level, for a given persona $r_m$, \rrw performs $\ell$-step random walk in a reverse manner from every prototype user to compute $\Att_\stp[:, \Ul]$, which avoids the materialization of the entire $\Att_\stp$. Due to the existence of massive tiny attention values, \rrw resorts to computing $\err$-approximate user-persona affinities $\hat{\Aff}[:,r_m]$ for an input persona $r_m\in \PP$, which is defined as follows.
\begin{definition}[$\err$-approximate user-persona affinity]\label{def:e-approximate}
    Given an absolute error threshold $\err$, a persona $r_m$, and a prototype user set $\Ul$, for any $u_i\in \Ul$, $\hat{\Aff}[u_i,\prs_m]$ is an $\err$-approximation of $\Aff[u_i,\prs_m]$ if it satisfies $|\Aff[u_i,\prs_m] - \hat{\Aff}[u_i,\prs_m]| \leq \err$.
\end{definition}
}
For ease of presentation, we only illustrated the single persona case in the sequel.
To compute $\hat{\Aff}$, we can repeatedly invoke \rrw by taking as an input every persona $r_m\in \PP$.

\begin{algorithm}[t]
\caption{\rrw}\label{alg:rrw}
\KwIn{Purchase histories $\G$, target persona $\prs_m$, de-bias coefficient $c_m$, prototype representation matrix $\Pa$, and error tolerance $\err$}
\KwOut{$\hat{\Aff}[:,r_m]$}
$\s_t[:,\prs_m],\p_t \gets \mathbf{0},\ \forall t\in \{2\Stp\}$;\\
$\s_0[:,\prs_m], \p_0\gets \Lbl[:,\prs_m]$ in Eq.~\eqref{eq:labelmatrix};\\
\While{$\exists \ww,~\forall t\leq 2\Stp$, $\p_t[\ww] \geq \epsilon/2\Stp$, }{
    $t,\ww \gets \argmax{t,\ww\in \U \cup \V}{\p_t[\ww]}$; \\
    \For{$\w \in \nei{\ww}$}{
        $\Delta \gets \frac{\p_t[\ww]}{|\nei{\w}|}$;\\
        $\s_{t+1}[\w,\prs_m] \gets \s_{t+1}[\w,\prs_m] + \Delta$;\\
        $\p_{t+1}[\w] \gets \p_{t+1}[\w] + \Delta$;\\
    }
    $\p_t[\ww] \gets 0$; \\
}
\Return{$\hat{\Aff}[:,r_m]\gets \frac{1}{\Stp}\sum_{\stp \leq \Stp} \s_{2\stp}[U,\prs_m]$;}
\end{algorithm}

Given a target persona node $\prs_m$, the main idea of \rrw is to reversely conduct a deterministic graph traversal and compute the estimated $\Aff[:,r_m]$.
We use $\s_t[:,\prs_m]\in \mathbb{R}^{|U|+|V|}$ to record the estimation for every node $\w\in \U\cup \V$ to $r_m$ in the hop $t=1,2,...,2\Stp$, \kdd{where a hop means a user node jumps to an item node in a specific step, or vice versa.}
Following the previous work~\cite{lofgren2013reverse}, we have the following recurrence formula for designing \rrw:
\begin{equation*}\label{eq:rec}
    \textstyle\Aff_t[\w,\prs_m]= 
    \begin{cases}
   \Lbl[w,\prs_m], & t=0 \text{ and } w\in \U \\    
  \sum\limits_{\ww \in \nei{\w}}\frac{\Aff_{t-1}[\ww,\prs_m]}{|\nei{\w}|},  & \text{otherwise} 
\end{cases}.
\end{equation*}

Based on Eq.~\eqref{eq:exactaff} and Eq.~\eqref{eq:aff}, we derive that the target $\Aff$ is given by $\frac{1}{\Stp}\sum_{\stp\leq \Stp} \Aff_{2\stp}$. The pseudocode of \rrw is shown in Algorithm~\ref{alg:rrw}.
In Lines 1-2, we first initialize $\s_0$ and its corresponding temporary vector $\p_0$ to zero vectors, and then set  $\s_0[u_k,\prs_m]=\p_0[u_k,\prs_m]=\Lbl[u_k,\prs_m]$ for all $u_k \in \Ul$. 
In subsequent Lines 3-9, we iteratively estimate the desired values of each hop in an asynchronous manner.
Specifically, in each iteration, we select the largest value in all $\p_t$ vectors with $t=1,2,...,2\Stp$, and denote the corresponding node and hop as $\ww$ and $t$, respectively. For every neighbor $w \in \nei{\ww}$, we increase the value of $\s_{t+1}[w,\prs_m]$ and  $\p_{t+1}[\w]$ by a term $\Delta=\frac{\p_t[\ww]}{|\nei{\w}|}$. After that, we reset $\p_{t+1}[\ww]$ to $0$ since all its accumulated updates have been propagated backward. Algorithm~\ref{alg:rrw} repeats this process until the values in all $\p_t$ vectors with $t=1,2,...,2\Stp$ are lower than a predefined threshold $\err/2\Stp$. \kdd{At last, we assign $\frac{1}{\Stp}\sum_{\stp \leq \Stp} \s_{2\stp}[U,\prs_m]$ to $\hat{\Aff}[:,r_m]$ as the output.}
The following theorems show the theoretical guarantee and time complexity that \rrw provides.

\begin{theorem}
\label{thm:2}
    Given a persona $\prs_m$ and an error $\err$, $\s[:,\prs_m]$ generated by \rrw is an $\err$-approximate user-persona affinity.
\end{theorem}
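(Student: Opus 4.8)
The plan is to recast the per-hop quantities of \rrw as a power iteration driven by a single row-stochastic operator, establish an exact invariant linking the reserves $\s_t$, the residuals $\p_t$, and the true hop values $\Aff_t$, and finally convert the termination threshold into the claimed bound. First I would fix the operator view: regard each reserve/residual vector as indexed by $\U\cup\V$, and let $\mathbf{B}\in\mathbb{R}^{(|\U|+|\V|)\times(|\U|+|\V|)}$ be the transition matrix of $\G$ whose only nonzero blocks are $\Qm$ (rows $\U$, columns $\V$) and $\Qmm$ (rows $\V$, columns $\U$). Since every node has at least one neighbor, $\mathbf{B}\mathbf{1}=\mathbf{1}$, so $\mathbf{B}$ and every power $\mathbf{B}^{k}$ are nonnegative and \emph{row}-stochastic. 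Reading $\Lbl[:,\prs_m]$ as a vector over $\U\cup\V$ that vanishes on $\V$, the recurrence defining $\Aff_t$ is exactly $\Aff_t[:,\prs_m]=\mathbf{B}\,\Aff_{t-1}[:,\prs_m]=\mathbf{B}^{t}\,\Lbl[:,\prs_m]$, whose even-hop user block is $(\Qm\Qmm)^{\stp}\Lbl[:,\prs_m]=\Att_{\stp}\Lbl[:,\prs_m]$. Hence by Eq.~\eqref{eq:exactaff} and Eq.~\eqref{eq:aff} the exact target is $\Aff[:,\prs_m]=\frac{1}{\Stp}\sum_{\stp\le\Stp}\Aff_{2\stp}[:,\prs_m]$, the very form \rrw returns with $\s$ substituted for $\Aff$.

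Next I would prove, by induction on the number of executed pushes, the invariant that for every hop $t$ and node $w$, $\Aff_t[w,\prs_m]=\s_t[w,\prs_m]+\sum_{s=0}^{t-1}\big(\mathbf{B}^{\,t-s}\p_s\big)[w]$. The base case holds at initialization, where $\s_t=\mathbf{0}$ and $\p_t=\mathbf{0}$ for $t\ge1$ while $\p_0=\Lbl[:,\prs_m]$, so the sum collapses to $\mathbf{B}^{t}\p_0=\Aff_t$. For the inductive step, a push at $(t^\ast,\ww)$ zeroes $\p_{t^\ast}[\ww]$ and, for each $\w\in\nei{\ww}$, adds $\tfrac{\p_{t^\ast}[\ww]}{|\nei{\w}|}=\mathbf{B}[\w,\ww]\,\p_{t^\ast}[\ww]$ to both $\s_{t^\ast+1}[\w]$ and $\p_{t^\ast+1}[\w]$. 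I would then check hop by hop that the invariant is preserved: rows $t\le t^\ast$ are untouched; for $t=t^\ast+1$ the reserve increment cancels the drop of the $s=t^\ast$ term; and for $t>t^\ast+1$ the drop of the $s=t^\ast$ term is cancelled by the rise of the $s=t^\ast+1$ term (both equal to $\mathbf{B}^{\,t-t^\ast}\p_{t^\ast}[\ww]\,\mathbf e_{\ww}$).

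Finally I would turn the stopping rule into the bound. All residuals remain nonnegative, so $\Aff_t[w,\prs_m]\ge\s_t[w,\prs_m]$; at termination every residual satisfies $\p_s[w']<\err/2\Stp$. Using nonnegativity and row-stochasticity of $\mathbf{B}^{\,t-s}$ gives $\big(\mathbf{B}^{\,t-s}\p_s\big)[w]\le\max_{w'}\p_s[w']<\err/2\Stp$ for each term, so summing the invariant over the even hops in the output yields
\[
0\le \Aff[w,\prs_m]-\s[w,\prs_m]=\frac{1}{\Stp}\sum_{\stp=1}^{\Stp}\sum_{s=0}^{2\stp-1}\big(\mathbf{B}^{\,2\stp-s}\p_s\big)[w]<\frac{1}{\Stp}\sum_{\stp=1}^{\Stp}2\stp\cdot\frac{\err}{2\Stp}=\frac{(\Stp+1)\err}{2\Stp}\le\err .
\]
This gives $|\Aff[w,\prs_m]-\s[w,\prs_m]|\le\err$ for every user node $w$, which is exactly the $\err$-approximation of Definition~\ref{def:e-approximate}.

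I expect the inductive step to be the main obstacle: one must track the residual mass as it is shuffled from hop $t^\ast$ to hop $t^\ast+1$ and verify the exact cancellation in the invariant sum \emph{simultaneously for all} $t$, distinguishing the $t=t^\ast+1$ row (where the reserve absorbs the mass) from later rows (where residual shifting cancels). The second subtlety worth isolating is that $\mathbf{B}$ is row-stochastic rather than column-stochastic; this is precisely what lets a uniform per-node residual threshold propagate through arbitrarily many hops without amplification, and it also yields the one-sided, nonnegative error that tightens the final constant below $\err$.
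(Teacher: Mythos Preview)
Your proof is correct and takes a somewhat different route from the paper's. The paper works only at termination and establishes a one-step error recursion: writing $s_t(w,\prs_m)=\sum_{u_k\in\Ul}\tp_t(w,u_k)\,\Lbl[u_k,\prs_m]$ for the exact hop-$t$ value, it shows that $s_t(w)-\s_t[w,\prs_m]=\tfrac{1}{|\nei{w}|}\sum_{w'\in\nei{w}}\big(s_{t-1}(w')-\s_{t-1}[w',\prs_m]+\p_{t-1}[w']\big)$ and then inducts on the hop index to obtain $\max_w|s_t(w)-\s_t[w,\prs_m]|\le t\err/(2\Stp)$, finally averaging over the even hops. Your approach instead maintains a closed-form identity for the error throughout execution and verifies it push by push, in the Andersen--Chung--Lang local-push style; this is more explicit (the error is an equality until the very last line), delivers one-sidedness and the sharper constant $(\Stp+1)/(2\Stp)$ automatically, and isolates row-stochasticity of $\mathbf{B}$ as the mechanism that prevents residuals from amplifying across hops. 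The paper's one-step recursion is, in fact, your invariant unrolled once, so the two arguments are equivalent; the paper's proof is shorter because it never tracks individual pushes, while your inductive step fills in precisely the details that its ``Similarly\dots Inductively'' elides.
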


\begin{theorem}
\label{thm:3}
    Given a graph $\G$, a persona set $\PP$, and an error $\err$, the time complexity of estimating $\Aff$ by invoking \rrw from every $r_m\in \PP$ is $O\left(\frac{1}{\err}\Stp^2\left(\NN\log(\Stp\NN)+|E|\right)\right)$, where $N=|U|+|V|$.
\end{theorem}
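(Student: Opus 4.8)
The plan is to charge the cost of \rrw to a degree-weighted potential on the estimate vectors $\s_t$ and to exploit a conservation identity hidden in the recurrence. First I set up the cost model. Each iteration of the while-loop (Lines 3--9) extracts the globally largest residual $\p_t[\ww]$ and then, in Lines 5--8, touches every neighbor $\w\in\nei\ww$ exactly once; maintaining the residuals in a max-heap keyed over the at most $2\Stp\NN$ node--hop pairs, a single iteration costs $O(|\nei\ww|)$ for the neighbor scan (with $O(1)$ per residual increment) plus $O(\log(\Stp\NN))$ for the extraction. So it suffices to bound (i) the total neighbor work $W:=\sum_{\text{pushes}}|\nei\ww|$ and (ii) the number of extractions, and then sum over the $|\PP|$ invocations.

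For (i) I would introduce the potential $\Phi:=\sum_{t=1}^{2\Stp}\sum_{\w}|\nei\w|\,\s_t[\w,\prs_m]$, which starts at $0$. A push of $(\ww,t)$ adds $\Delta=\p_t[\ww]/|\nei\w|$ to $\s_{t+1}[\w]$ for each $\w\in\nei\ww$, so it raises $\Phi$ by exactly $\sum_{\w\in\nei\ww}|\nei\w|\cdot\Delta=|\nei\ww|\cdot\p_t[\ww]$. Since the loop only pushes when $\p_t[\ww]\ge\err/2\Stp$, every push raises $\Phi$ by at least $\frac{\err}{2\Stp}|\nei\ww|$, giving $W\le\frac{2\Stp}{\err}\,\Phi_{\mathrm{final}}$.

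The crux is bounding $\Phi_{\mathrm{final}}$. Because every residual increment is non-negative and the estimates are partial sums of the true non-negative quantity $\Aff_t$, the invariant behind Theorem~\ref{thm:2} yields $\s_t[\w,\prs_m]\le\Aff_t[\w,\prs_m]$, so $\Phi_{\mathrm{final}}\le\sum_{t=1}^{2\Stp}\sum_\w|\nei\w|\,\Aff_t[\w,\prs_m]$. Here I would use the key conservation fact: summing the recurrence $\Aff_t[\w,\prs_m]=\sum_{\ww\in\nei\w}\Aff_{t-1}[\ww,\prs_m]/|\nei\w|$ against the weights $|\nei\w|$ and swapping the order of summation gives $\sum_\w|\nei\w|\,\Aff_t[\w,\prs_m]=\sum_\ww|\nei\ww|\,\Aff_{t-1}[\ww,\prs_m]$; that is, the degree-weighted mass is preserved at every hop (the degree vector is a left fixed point of the bipartite walk). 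Hence each hop contributes the same value $\sum_{u_k\in\Ul}|\nei{u_k}|\,\Lbl[u_k,\prs_m]$, and $\Phi_{\mathrm{final}}\le2\Stp\sum_{u_k\in\Ul}|\nei{u_k}|\,\Lbl[u_k,\prs_m]$. Summing over all personas and using $\sum_{\prs_m}\Lbl[u_k,\prs_m]\le1$ (the de-biased label rows are sub-distributions since $c_m\le1$) together with $\sum_{u_k}|\nei{u_k}|\le|E|$ collapses the per-persona bounds to $\sum_{\prs_m}\Phi_{\mathrm{final}}^{(\prs_m)}\le2\Stp|E|$, independent of $|\PP|$. This delivers a total neighbor work of $O(\Stp^2|E|/\err)$, matching the $\frac{\Stp^2|E|}{\err}$ term.

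Finally, the $\frac{\Stp^2\NN\log(\Stp\NN)}{\err}$ term comes from the extractions: the log-free residual increments already account for the $|E|$ term, so each extract-max over the $O(\Stp\NN)$-entry heap contributes the logarithmic factor, and initialization of the $2\Stp$ vectors of length $\NN$ costs $O(\Stp\NN)$. I expect the main obstacle to be the bookkeeping in this last step --- namely bounding the number of heap extractions so that it is the node count $\NN$ (and not $|E|$) that multiplies $\log(\Stp\NN)$. The clean way is to charge repeated pushes of the same node--hop pair to the log-free work term $W$ and to pay the logarithmic cost only against the at most $2\Stp\NN$ distinct active pairs; verifying this charging, and confirming the underestimate invariant $\s_t\le\Aff_t$ used above, are the two steps that need the most care.
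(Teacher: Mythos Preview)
Your degree-weighted potential argument for the neighbor-work term $W=O(\Stp^2|E|/\err)$ is correct and is a genuinely different route from the paper. You exploit that the degree vector is a left fixed point of the bipartite walk, so $\sum_\w|\nei\w|\,\Aff_t[\w,\prs_m]$ is conserved across hops; summing over personas then collapses via $\sum_{\prs_m}\Lbl[u_k,\prs_m]\le1$ and $\sum_{u_k}|\nei{u_k}|\le|E|$. The paper never uses this stationarity. It relies only on the row-stochastic fact $\sum_{u_k}\tp_t(\ww,u_k)\le1$: the total inflow into $\p_t[\ww]$ over the run is at most $\Aff_t[\ww,\prs_m]=\sum_{u_k}\tp_t(\ww,u_k)\,\Lbl[u_k,\prs_m]$, so the pair $(\ww,t)$ is popped at most $\tfrac{2\Stp}{\err}\Aff_t[\ww,\prs_m]$ times; summing over $t$ and over $\prs_m$ (again via $\sum_{\prs_m}\Lbl[u_k,\prs_m]\le1$) gives at most $\tfrac{4\Stp^2}{\err}$ pops \emph{per node} $\ww$, and multiplying by $O(\log(\Stp\NN)+|\nei\ww|)$ and summing over $\ww$ yields both the $\NN\log(\Stp\NN)$ and the $|E|$ terms simultaneously.

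The gap in your plan is exactly where you flagged it. The charging scheme you sketch --- pay $\log$ only on the at most $2\Stp\NN$ distinct pairs and absorb repeated pops into $W$ --- does not close: a low-degree node $\ww$ can be popped $\Theta(\Stp/\err)$ times at a single hop level, and each extract-max costs $\Omega(\log(\Stp\NN))$ regardless of $|\nei\ww|$; those logarithmic costs are not covered by the $O(|\nei\ww|)$ contribution to $W$ when $|\nei\ww|\ll\log(\Stp\NN)$. What you actually need is a direct bound on the \emph{count} of pops, and the row-stochastic inequality supplies it in one line --- your underestimate invariant $\s_t\le\Aff_t$ is precisely what bounds the inflow per $(\ww,t)$. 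Once you have the per-node pop bound, your stationarity argument for the $|E|$ term becomes an elegant alternative rather than a necessary detour.
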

By Theorems~\ref{thm:1} and~\ref{thm:3}, \rrw improves the running time of the exact solution by a factor of $O\left(\frac{\err\cdot |\E|\NN}{|\E|+\NN\log\NN}\right)$ for $\Stp=1$ and $O\left(\frac{\err\cdot \NN^3}{\Stp^2(|\E|+\NN\log\NN)}\right)$ for $\Stp>1$, demonstrating the efficiency of \rrw.
\section{Product Recommendation}\label{sec:ProductRecommendation}
In this section, we first take graph convolution-based collaborative filtering algorithms~\citeN{he2020lightgcn, yu2022low} as an example and review their general framework. We then present our approach that leverages the proposed users' persona-based representations to enhance these methods for product recommendation.
\eat{Section \ref{sec:ProductRecommendation}, we discuss integrating customer representations by personas into existing graph convolution-based algorithms to enhance personalized recommendation performance.}

\stitle{Graph convolution-based methods}
Given the bipartite graph $\G$, these methods represent all user nodes and item nodes as trainable embeddings $\X\in \mathbb{R}^{N \times \dms}$, with embedding dimension $\dms$. With $\G$'s adjacency matrix $\A\in \{0,1\}^{N\times N}$, where $\A[u_i,v_j]=\A[v_j,u_i]=1$ if $e_{ij}\in \E$ and $\A[u_i,v_j]=0$ otherwise, the graph convolution process represents nodes as $\Z = \h(\A, \X)$, where $\h$ is a given graph convolution function. \eat{The affinity score between user $u_i$ and item $v_j$ is defined as the inner-product of their representation vectors, \ie $\afff = \langle \Z[u_i], \Z[v_j] \rangle$.} They frequently apply ranking losses for the optimization, \eg Bayesian Personalized Ranking (BPR)~\cite{rendle2012bpr}, which optimizes each user $u_i$ to have a higher inner product between $\Z[u_i]$ and $\Z[v_j]$, where $v_j$ is the item purchased by $u_i$. It leads to a user having a similar representation vector with her neighbor users who share many identical purchased products, as well as their other purchased products. In this way, her neighbors' other purchases will also have a larger chance of being recommended to this user.

\stitle{Recommend with personas}\label{sec:lgcn3}
In addition to the observed purchases in $\G$, persona-based representations provide external information for each user. Leveraging LLM's real-world knowledge and reasoning abilities, these additional features include deep insights. Associating personas with their highly interested products allows for a more comprehensive consideration of the candidate items. For example, items with sparse purchases can attract sufficient attention if they align well with certain persona combinations. On the other hand, users with minor personas will avoid overemphasizing the purchases of mainstream users and popular items. Here we propose a straightforward method to integrate persona-based representations with graph convolution-based algorithms. First, we transform the original bipartite graph $\G$ to a new tripartite graph $\Gtri=(\U,\V,\PP,\Etri)$ by adding persona nodes as a new partition. In $\Gtri$, the edge set $\Etri$ contains the original edges in $\E$ and the edges between persona nodes and the associated user and item nodes. Specifically, the edge between persona and user nodes can be constructed using the proposed representation $\Pa$. Furthermore, the edges between persona and item nodes can be determined by querying an LLM to identify which personas an item pertains to, leveraging the LLM's reasoning capabilities and the description of each persona. We extend trainable embeddings to include persona nodes as $\Xtri\in \mathbb{R}^{\Ntri \times \dms}$ and reconstruct the adjacency matrix based on $\Gtri$. Without changing the graph convolution function $\h$, the representation vectors on the tripartite graph are given by $\Ztri$. \eat{During training, we only consider the ranking loss between user nodes and item nodes. The optimized embeddings can be defined as
$\hat{\Xtri} = \arg \min_{\Xtri} \ Loss\left(\h(\Atri, \Xtri), \E \right).$}
Except for adding an external set of trainable embeddings for persona nodes, only the propagation procedure changes due to modifying the adjacency matrix. This facilitates the easy migration of mainstream graph convolution-based algorithms, such as LGCN~\cite{yu2022low} and AFDGCF~\cite{wu2024afdgcf}, from the bipartite graph to the tripartite graph with personas by simply replacing the $\A$ with $\Atri$, even without explicit modification of the implementation.
\begin{table}[!t]
\caption{Statistics of datasets.}
\vspace{-2mm}
\begin{small}
\begin{tabular}{lllll}
\hline
Dataset   & User\#   & Item\#   & Interaction\# & Sparsity  \\ \hline
\mba      & 4,297  & 3,846  & 263,267      & 98.4070\% \\
\ins      & 20,620 & 41,521 & 1,333,805    & 99.8442\% \\
\insfull  & 206,209 & 49,677 & 13,307,953 & 99.8701\% \\ \hline
\end{tabular}
\end{small}
\label{tbl:datasets}
\vspace{-2mm}
\end{table}

\setlength{\tabcolsep}{2pt}
\begin{table*}[!t]
\centering
\begin{small}
\caption{Performance evaluation in NDCG@K (N@K) and F1-Score@K (F@K).}\vspace{-3mm}
\label{tbl:compare-merge}
\resizebox{\textwidth}{!}{
\begin{tabular}{c|cccccc|cccccc|ccccc}
\hline
\multirow{2}{*}{} &
  \multicolumn{6}{c|}{OnelineRetail} &
  \multicolumn{6}{c|}{Instacart} &
  \multicolumn{5}{c}{Instacart Full} \\ \cline{2-18} 
 &
  MF &
  Light &
  LGCN &
  \multicolumn{1}{c|}{AFD} &
  LGCN3 &
  A-LGCN3 &
  MF &
  Light &
  LGCN &
  \multicolumn{1}{c|}{AFD} &
  LGCN3 &
  A-LGCN3 &
  MF &
  Light/AFD &
  \multicolumn{1}{c|}{LGCN} &
  LGCN3 &
  A-LGCN3 \\ \hline
N@2 &
  0.2391 &
  0.2801 &
  0.2686 &
  \multicolumn{1}{c|}{0.2898} &
  0.2933 &
  \textbf{0.2940} &
  0.1166 &
  0.1477 &
  0.1405 &
  \multicolumn{1}{c|}{0.1535} &
  0.1570 &
  \textbf{0.1634} &
  0.1149 &
  OOM &
  \multicolumn{1}{c|}{0.1485} &
  \textbf{0.1540} &
  0.1533 \\
N@5 &
  0.2143 &
  0.2497 &
  0.2356 &
  \multicolumn{1}{c|}{0.2578} &
  \textbf{0.2602} &
  0.2549 &
  0.1006 &
  0.1273 &
  0.1208 &
  \multicolumn{1}{c|}{0.1308} &
  0.1319 &
  \textbf{0.1357} &
  0.0972 &
  OOM &
  \multicolumn{1}{c|}{0.1248} &
  \textbf{0.1300} &
  0.1296 \\
N@10 &
  0.2104 &
  0.2443 &
  0.2274 &
  \multicolumn{1}{c|}{0.2489} &
  \textbf{0.2503} &
  0.2497 &
  0.0916 &
  0.1180 &
  0.1106 &
  \multicolumn{1}{c|}{0.1205} &
  0.1197 &
  \textbf{0.1230} &
  0.0879 &
  OOM &
  \multicolumn{1}{c|}{0.1126} &
  \textbf{0.1187} &
  0.1182 \\
N@20 &
  0.2221 &
  0.2575 &
  0.2383 &
  \multicolumn{1}{c|}{0.2595} &
  0.2577 &
  \textbf{0.2617} &
  0.0934 &
  0.1198 &
  0.1117 &
  \multicolumn{1}{c|}{0.1222} &
  0.1206 &
  \textbf{0.1242} &
  0.0879 &
  OOM &
  \multicolumn{1}{c|}{0.1130} &
  \textbf{0.1210} &
  0.1200 \\
N@50 &
  0.2612 &
  0.2942 &
  0.2772 &
  \multicolumn{1}{c|}{0.2978} &
  0.2940 &
  \textbf{0.2996} &
  0.1150 &
  0.1448 &
  0.1351 &
  \multicolumn{1}{c|}{0.1476} &
  0.1451 &
  \textbf{0.1491} &
  0.1052 &
  OOM &
  \multicolumn{1}{c|}{0.1363} &
  \textbf{0.1474} &
  0.1463 \\
N@100 &
  0.3011 &
  0.3337 &
  0.3167 &
  \multicolumn{1}{c|}{0.3346} &
  0.3325 &
  \textbf{0.3392} &
  0.1389 &
  0.1733 &
  0.1618 &
  \multicolumn{1}{c|}{0.1761} &
  0.1732 &
  \textbf{0.1766} &
  0.1268 &
  OOM &
  \multicolumn{1}{c|}{0.1634} &
  \textbf{0.1773} &
  0.1758 \\ \hline
F@2 &
  0.0859 &
  0.1080 &
  0.0955 &
  \multicolumn{1}{c|}{0.1066} &
  0.1052 &
  \textbf{0.1092} &
  0.0326 &
  0.0415 &
  0.0375 &
  \multicolumn{1}{c|}{0.0427} &
  0.0411 &
  \textbf{0.0433} &
  0.0310 &
  OOM &
  \multicolumn{1}{c|}{0.0376} &
  \textbf{0.0396} &
  0.0392 \\
F@5 &
  0.1123 &
  0.1333 &
  0.1207 &
  \multicolumn{1}{c|}{0.1348} &
  0.1332 &
  \textbf{0.1349} &
  0.0478 &
  0.0609 &
  0.0564 &
  \multicolumn{1}{c|}{\textbf{0.0629}} &
  0.0611 &
  0.0628 &
  0.0457 &
  OOM &
  \multicolumn{1}{c|}{0.0570} &
  \textbf{0.0604} &
  0.0599 \\
F@10 &
  0.1225 &
  0.1416 &
  0.1352 &
  \multicolumn{1}{c|}{\textbf{0.1436}} &
  0.1431 &
  0.1432 &
  0.0560 &
  0.0716 &
  0.0672 &
  \multicolumn{1}{c|}{0.0730} &
  0.0722 &
  \textbf{0.0738} &
  0.0539 &
  OOM &
  \multicolumn{1}{c|}{0.0684} &
  \textbf{0.0727} &
  0.0722 \\
F@20 &
  0.1239 &
  0.1398 &
  0.1341 &
  \multicolumn{1}{c|}{0.1393} &
  0.1406 &
  \textbf{0.1415} &
  0.0602 &
  0.0752 &
  0.0710 &
  \multicolumn{1}{c|}{0.0768} &
  0.0758 &
  \textbf{0.0780} &
  0.0571 &
  OOM &
  \multicolumn{1}{c|}{0.0726} &
  \textbf{0.0779} &
  0.0775 \\
F@50 &
  0.1081 &
  0.1181 &
  0.1159 &
  \multicolumn{1}{c|}{0.1186} &
  0.1201 &
  \textbf{0.1210} &
  0.0565 &
  0.0684 &
  0.0644 &
  \multicolumn{1}{c|}{0.0692} &
  0.0681 &
  \textbf{0.0701} &
  0.0525 &
  OOM &
  \multicolumn{1}{c|}{0.0661} &
  \textbf{0.0713} &
  0.0709 \\
F@100 &
  0.0886 &
  0.0955 &
  0.0934 &
  \multicolumn{1}{c|}{0.0945} &
  0.0967 &
  \textbf{0.0976} &
  0.0470 &
  0.0562 &
  0.0528 &
  \multicolumn{1}{c|}{0.0568} &
  0.0557 &
  \textbf{0.0570} &
  0.0434 &
  OOM &
  \multicolumn{1}{c|}{0.0545} &
  \textbf{0.0587} &
  0.0583 \\ \hline
\end{tabular}
}
\end{small}
\end{table*}
\section{Experiments}
In this section, we evaluate our proposed \ours, \yimin{mainly} focusing on its application to personalized recommendations on real-world e-commerce datasets. 
We aim at the following five research questions:
\begin{itemize}[topsep=2pt,itemsep=1pt,parsep=0pt,partopsep=0pt,leftmargin=11pt]
\item \textbf{RQ1}: How do persona-based representations improve the effectiveness of the personalized recommendation?
\item {\textbf{RQ2}: How does the choice of LLM, sampling budget, and random walk length affect \ours's performance \yimin{in the recommendation?}}
\yimin{
\item \textbf{RQ3}: How do persona-based representations outperform existing explicit customer representations in customer segmentation?}
\item \textbf{RQ4}: How much speedup does the reverse approximate solution yield in affinity computation?
\end{itemize}
{All experiments are conducted on a Linux machine with 12 Intel(R) Xeon(R) CPU @ 2.20GHz, 52GB of RAM and an NVIDIA L4 (24GB) GPU.}

\subsection{Experimental Settings}
\stitle{Dataset description}
To evaluate how integrating customer representations by personas improves the e-commerce personalized recommendation, we conduct experiments on the following publicly accessible real-world datasets that vary in size and sparsity:

\mba, containing sales data from a European retailer in 2010~\cite{ds_mba};
\ins, consisting of randomly sampled 20,620 anonymized users and their purchase data grocery orders in 2017~\cite{ds_ins}; \insfull, the full version of \ins dataset.
The statistics of these datasets are summarized in Table \ref{tbl:datasets}. For each dataset, we randomly select approximately $80\%$ of each user's historical purchases as the training set to construct $\G$, leaving the remaining $20\%$ as the test set.

\stitle{Model configurations}
\yimin{
To evaluate the performance of persona-based representation in the recommendation, we select two state-of-the-art models, LGCN~\cite{yu2022low} and AFDGCF~\cite{wu2024afdgcf}. 
Specifically, LGCN is an efficient and scalable graph convolution-based (GCN) model that is capable of processing large graphs with more than tens of millions of interaction edges.
AFDGCF improves the recommendation accuracy by adding a de-correlation loss term to GCN models, including LightGCN~\cite{he2020lightgcn} and LGCN.
We improve LGCN by integrating personas using the procedure explained in Section~\ref{sec:lgcn3} and name the enhanced model \lgcntri.
Due to the scalability issue of the default LightGCN model in AFDGCF, we choose to further improve \lgcntri by adding AFDGCF's de-correlation loss and named this model \afdtri.
To ensure a fair comparison, we use the same embedding size and fine-tuned the hyperparameters for all improved models and compared baselines respectively.}

For \ours, we set with the cutoff constant to $5$, the number of iterations as $10$ and employ OpenAI's GPT-4 API~\cite{achiam2023gpt} 
with its default settings to generate $\Pa$ and conduct \LLM. For computing the user-persona affinity matrix, we set the random walk length $\stp=1$ and de-bias hyperparameter $\beta=0.5$. In addition, we use \rrw with $\err=0.001$ for \insfull while using the exact solution for other datasets.
All implementation details including datasets, prompt templates, and algorithms are available at: \color{blue}
\url{https://github.com/Hanc1999/YouAreWhatYouBought}.
\color{black}

\stitle{Evaluation metric}
We employ NDCG@K and F1-Score@K as evaluation metrics on the recommendation results with K from $\{2,5,10,20,50,100\}$ to ensure a comprehensive comparison, where $\text{NDCG@K} = \frac{\text{DCG@K}}{\text{IDCG@K}}$ and $\text{F1-Score@K} = 2 \times \frac{\text{Precision@K} \times \text{Recall@K}}{\text{Precision@K} + \text{Recall@K}}$. We repeat each experiment three times and report the average performance.

\subsection{Recommendation Performance (RQ1)} 
\yimin{
\stitle{Overall evaluations}
We compare the performance of \lgcntri and \afdtri with six well-adopted baselines: MF~\cite{koren2009mf}, GCMC~\cite{berg2017graph}, LCFN~\cite{yu2020graph}, LightGCN~\cite{he2020lightgcn}, LGCN~\cite{yu2022low} and AFDGCF~\cite{wu2024afdgcf}.
Table~\ref{tbl:compare-merge} presents the main performance results of MF, LightGCN (Light), LGCN, AFDGCF (AFD), \lgcntri and \afdtri in terms of NDCG@K and F1-Score@K, with the best method highlighted in bold. Due to the space constraints, we omit LCFN and GCMC, which perform lower on most metrics than LightGCN.
As demonstrated in Table~\ref{tbl:compare-merge}, we observe that after integrating LGCN with persona-based customer representations generated by \ours, \lgcntri significantly improves NDCG@K and F1-Score@K metrics by up to 10.4\%, 11.7\%, and 8.5\% across three real-world datasets. Meanwhile, \lgcntri maintains LGCN's efficiency and scalability.
We further conduct paired t-tests between LGCN and \lgcntri across all three datasets. The average p-values for F1-Score@K and NDCG@K are 0.027 and 0.012, suggesting the improvement is statisticatlly meaningful.
After introducing the same de-correlation loss term from the AFDGCF framework to \lgcntri, the recommendation performance of \afdtri is further improved on \mba and \ins datasets. It outperforms AFDGCF in most metrics by up to 6.4 \% without affecting its efficiency.
In addition, when processing the largest dataset \insfull, LGCN, \lgcntri, and \afdtri demonstrate better scalability, whereas LCFN, LightGCN, and AFDGCF exceed the RAM limit.
}

\stitle{Case study}
In Table \ref{tbl:case2}, the first column lists the test set products for User 156246 (unordered). The following columns display the top-10 recommendations returned by LGCN and \lgcntri, ranked in descending order based on the inner product of the corresponding user and item embedding in LGCN. A recommended item is bolded if it appears in the test set. This user's 39 historical purchases in the training set include nine organic foods, six fruits, six vegetables, four baby foods, and some healthy items like yogurts and chicken breasts. Accordingly, \ours assigned her with personas \textit{Organic Foodie}, \textit{Health Enthusiast}, and \textit{Baby Care Provider}. 
LGCN, without persona considerations, mostly recommended popular fruits and vegetables, {with all recommendations being among the top 100 most popular out of 41,521 products, including six in the top 20.} However, it ignores the user's preference for baby foods, resulting in only one product (Banana) overlapping with the test set, which contains six other baby foods.
{In contrast,} with the persona-based representation, \lgcntri prioritizes the baby food category, which is much less popular than categories like fruits and vegetables in this dataset due to data bias. It recommends two relatively popular items in the baby food category and puts them in high-ranking positions, successfully hitting two additional products in the test set. Furthermore, \lgcntri removed three top-20 popular false positives from LGCN's results, indicated by the under-wave. {In this way, personas help \lgcntri to reduce overemphasis on highly popular items, enabling fairer consideration of other products that align with the user’s personas.}

\stitle{Personas vs. categorical features} 
\yimin{
To demonstrate the effectiveness of personas compared to traditional categorical features, we introduce LGCNL as another extension of LGCN, which replaces personas with category-based representations~\cite{li2015mining} and follows the procedure in Section~\ref{sec:lgcn3}. Due to the lack of category information in \mba, we report LGCNL’s performance on \ins. As shown in Table~\ref{tbl:lifestyle}, \lgcntri consistently outperforms LGCNL across all NDCG@K and F1-Score@K metrics. For instance, \lgcntri improves upon LGCNL by up to 3.6\% in NDCG@K and 2.0\% in F1-Score@K. This demonstrates that our generated personas are more informative and effective in describing a customer's purchase preference, leading to more improvement in the recommendation.
}

\begin{table}
\centering
\begin{small}
\caption{Case study for User 156246.}\vspace{-3mm}
\label{tbl:case2}
\resizebox{\columnwidth}{!}{%
\begin{tabular}{c|c|c|c}
\hline
\textbf{Test Set Products} & \textbf{\#} & \textbf{LGCN} & \textbf{\lgcntri} \\ 
\hline
Babyfood (Broccoli) & 1 & \textbf{\textit{Banana}} & \textbf{\textit{Banana}} \\ 
Pasta Sauce & 2 & Strawberries & Strawberries \\ 
Babyfood (Pumpkin) & 3 & Limes & Limes \\ 
Cheese Slices & 4 & Red Onion & \textbf{Babyfood (Spinach)} \\ 
Babyfood (Mighty) & 5 & \uwave{Yellow Onions} & \textbf{Babyfood (Carrot)} \\ 
\textbf{\textit{Banana}} & 6 & Bunched Cilantro & Spinach \\ 
\textbf{Babyfood (Spinach)} & 7 & \uwave{Organic Strawberries} & Red Onion \\ 
\textbf{Babyfood (Carrot)} & 8 & Whole Wheat Bread & Bunched Cilantro \\ 
Babyfood (Beet) & 9 & \uwave{Cucumber Kirby} & Red Peppers \\ 
Organic Cucumbers & 10 & Green Onions & Organic Fuji Apple \\ 
\hline
\end{tabular}
}
\end{small}
\end{table}

\setlength{\tabcolsep}{10pt}
\begin{table}[!t]
\centering
\begin{small}
\caption{LGCN3 with different personas on \ins.}\vspace{-2mm}
\label{tbl:lifestyle}

\begin{tabular}{c|cc|cc}
\hline
\multirow{2}{*}{K} & \multicolumn{2}{c|}{NDCG@K} & \multicolumn{2}{c}{F1-Score@K} \\ \cline{2-5} 
                   & LGCN3             & LGCNL   & LGCN3              & LGCNL     \\ \hline
2                  & \textbf{0.1570}   & 0.1516  & \textbf{0.0411}    & 0.0403    \\
5                  & \textbf{0.1319}   & 0.1282  & \textbf{0.0611}    & 0.0602    \\
10                 & \textbf{0.1197}   & 0.1173  & \textbf{0.0722}    & 0.0713    \\
20                 & \textbf{0.1206}   & 0.1191  & \textbf{0.0758}    & 0.0749    \\
50                 & \textbf{0.1451}   & 0.1438  & \textbf{0.0681}    & 0.0673    \\
100                & \textbf{0.1732}   & 0.1707  & \textbf{0.0557}    & 0.0549    \\ \hline
\end{tabular}
\vspace{-2mm}
\end{small}
\end{table}

\subsection{Ablation Study (RQ2)}

\setlength{\tabcolsep}{3pt}
\begin{table}[!t]
\centering
\begin{small}
\caption{LGCN3 with various LLMs on \mba.}\vspace{-2mm}
\label{tbl:compare-llm}

\begin{tabular}{c|cc|cc}
\hline
\multirow{2}{*}{K} & \multicolumn{2}{c|}{NDCG@K}       & \multicolumn{2}{c}{F1-Score@K} \\ \cline{2-5} 
                   & Llama-3-70B     & GPT-4-Turbo     & Llama-3-70B      & GPT-4-Turbo \\ \hline
2                 & 0.2925          & \textbf{0.2933} & 0.1050  & \textbf{0.1052}      \\
5                 & 0.2553  & \textbf{0.2601}         & \textbf{0.1355}  & 0.1332      \\
10                 & 0.2455          & \textbf{0.2503} & \textbf{0.1440}  & 0.1431      \\
20                 & \textbf{0.2578} & 0.2578          & \textbf{0.1428}  & 0.1406      \\
50                 & \textbf{0.2965} & 0.2940          & \textbf{0.1217}  & 0.1201      \\
100                & \textbf{0.3330} & 0.3325          & \textbf{0.0972}  & 0.0967      \\ \hline
\end{tabular}
\vspace{-2mm}
\end{small}
\end{table}
\begin{figure}[t]
\centering
\begin{small}
\begin{tikzpicture}
    \begin{customlegend}[legend columns=5,
        legend entries={K=2,K=5,K=10,K=20,K=50,K=100},
        legend columns=-1,
        area legend,
        legend style={at={(0.45,1.15)},anchor=north,draw=none,font=\scriptsize,column sep=0.15cm}]
        \addlegendimage{color=black,fill=red} 
        \addlegendimage{color=black,fill=pink} 
        \addlegendimage{color=black,fill=orange} 
        \addlegendimage{color=black,pattern color=violet,pattern=dots} 
        \addlegendimage{color=black,pattern color=teal,pattern=north west lines} 
        \addlegendimage{color=black,pattern color=blue,pattern=crosshatch} 
    \end{customlegend}
\end{tikzpicture}
\\[-\lineskip]
\hspace{-2mm}
\subfloat{
\hspace{-3.5mm}
\hbox{
\begin{tikzpicture}[scale=1]
\begin{axis}[
    height=\columnwidth/2.4,
    width=\columnwidth/1.6,
    ybar=1.0pt,
    bar width=0.08cm,
    enlarge x limits=true,
    ylabel={\em NDCG},
    xlabel= {\em sample rate},
    xmin=0.8, xmax=4.2,
    xtick={1,2,3,4},
    xticklabels={5\%,10\%,20\%,100\%},
    xticklabel style = {font=\footnotesize},
    ymin=0,
    ymax=0.4,
    ytick={0,0.1,0.2,0.3},
    yticklabels={0,0.1,0.2,0.3},
    yticklabel style = {font=\footnotesize},
    every axis y label/.style={at={(current axis.north west)},right=50mm,above=0mm},
    legend style={at={(0.02,0.98)},anchor=north west,cells={anchor=west},font=\tiny}
    ]
\addplot [color=black,fill=red] coordinates {
(1,	0.2967)
(2,	0.2961)
(3,	0.2899)
(4,	0.2933)
};
\addplot [color=black,fill=pink] coordinates {
(1,	0.2609)
(2,	0.2559)
(3,	0.2557)
(4,	0.2601)
};
\addplot [color=black,fill=orange] coordinates {
(1,	0.2510)
(2,	0.2474)
(3,	0.2512)
(4,	0.2503)
};
\addplot [color=black,pattern color=violet,pattern=dots] coordinates {
(1,	0.2589)
(2,	0.2563)
(3,	0.2604)
(4,	0.2578)
};
\addplot [color=black,pattern color=teal,pattern=north west lines] coordinates {
(1,	0.2944)
(2,	0.2960)
(3,	0.2988)
(4,	0.2940)
};
\addplot [color=black,pattern color=blue,pattern=crosshatch] coordinates {
(1,	0.3333)
(2,	0.3344)
(3,	0.3350)
(4,	0.3325)
};
\end{axis}
\end{tikzpicture}\vspace{2mm}\hspace{-1mm}%
\hspace{1mm}
\begin{tikzpicture}[scale=1]
\begin{axis}[
    height=\columnwidth/2.4,
    width=\columnwidth/1.6,
    ybar=1.0pt,
    bar width=0.08cm,
    enlarge x limits=true,
    ylabel={\em F1-Score},
    xlabel= {\em sample rate},
    xmin=0.8, xmax=4.2,
    xtick={1,2,3,4},
    xticklabels={5\%, 10\%, 20\%, 100\%},
    xticklabel style = {font=\footnotesize},
    ymin=0,
    ymax=0.2,
    ytick={0, 0.05, 0.1, 0.15},
    yticklabels={0,0.05,0.1,0.15},
    yticklabel style = {font=\footnotesize},
    every axis y label/.style={at={(current axis.north west)},right=5mm,above=0mm},
    legend style={at={(0.02,0.98)},anchor=north west,cells={anchor=west},font=\tiny}
    ]
\addplot [color=black,fill=red] coordinates {
(1,	0.1062)
(2,	0.1062)
(3,	0.1062)
(4,	0.1052)
};
\addplot [color=black,fill=pink] coordinates {
(1,	0.1336)
(2,	0.1346)
(3,	0.1341)
(4,	0.1332)
};
\addplot [color=black,fill=orange] coordinates {
(1,	0.1460)
(2,	0.1451)
(3,	0.1436)
(4,	0.1431)
};
\addplot [color=black,pattern color=violet,pattern=dots] coordinates {
(1,	0.1436)
(2,	0.1442)
(3,	0.1412)
(4,	0.1406)
};
\addplot [color=black,pattern color=teal,pattern=north west lines] coordinates {
(1,	0.1217)
(2,	0.1214)
(3,	0.1207)
(4,	0.1201)
};
\addplot [color=black,pattern color=blue,pattern=crosshatch] coordinates {
(1,	0.0966)
(2,	0.0965)
(3,	0.0966)
(4,	0.0967)
};
\end{axis}
\end{tikzpicture}\vspace{2mm}\hspace{-1mm}%
} 
} 
\vspace{-1mm}
\end{small}
\vspace{-3mm}
\caption{\lgcntri with different sample rates on \mba.} \label{fig:samplerate}
\vspace{-2mm}
\end{figure}

\stitle{LLMs}
\yimin{To explore the performance of \lgcntri with different LLMs, we replace the default GPT-4-Turbo model with Llama-3-70B~\cite{siriwardhana2024domain} in \ours and report the results on \mba in Table~\ref{tbl:compare-llm}.
As shown, there is no significant difference in their performance on NDCG@K and F1-Score@K. Specifically, \lgcntri with GPT-4-Turbo improves the NDCG@K and F1-Score@K on the \mba dataset by up to 10.4\% and 10.3\%. \lgcntri with Llama-3-70B improves these metrics by up to 8.9\% and 12.2\%. This indicates that the quality of persona representations returned by \ours is not sensitive to the LLMs.
}

\stitle{Sampling budget}
{To evaluate \lgcntri with different LLM budgets $\tau$, we run \ours by setting $\tau\in\{5\%, 10\%, 20\%, 100\%\}\times |\U|$, As reported in Figure~\ref{fig:samplerate}, there are no significant differences in recommendation performance across sample rates for either metric. Even with a sample rate of $5\%$, the results on both datasets nearly match the performance achieved when all persona representations are generated by the LLM. This indicates that our random walk-based method effectively infers personas for unlabeled users, reducing LLM usage costs without sacrificing downstream performance.
Consistent patterns emerge across all sample rates for both metrics on \mba. The F1-Score@K starts low, rises with $K$ up to around 10–20, then declines as recall improves at low $K$ but precision drops at higher values. In contrast, NDCG@K dips near $K=10$ before increasing, as iDCG stabilizes after initially rising to match the average test set size, allowing DCG growth to drive NDCG@K upward.}

\yimin{
\stitle{Random walk length}
To evaluate the impact of $\Stp$, we choose LGCN3 with a 20\% sampling rate and set $\Stp$ to 1 and 2, respectively. We then evaluate their performance on NDCG@K and F1-Score@K on the \mba dataset. As reported in Table~\ref{tbl:compare-stp}, while both settings show advantages across different metrics, the average improvement over LGCN for $\Stp=1$ is 8.3\%, which is higher than the 7.3\% for $\Stp=2$.
This might be because the attention score becomes less discernible as the random walk length increases.
\setlength{\tabcolsep}{3pt}
\begin{table}[!t]
\centering
\begin{small}
\caption{LGCN3 with various $\Stp$ on \mba.}\vspace{-2mm}
\label{tbl:compare-stp}

\begin{tabular}{c|cc|cc}
\hline
\multirow{2}{*}{K} & \multicolumn{2}{c|}{NDCG@K}                     & \multicolumn{2}{c}{F1-Score@K}                  \\ \cline{2-5} 
                   & LGCN3 ($\hat{\ell}=1$) & LGCN3 ($\hat{\ell}=2$) & LGCN3 ($\hat{\ell}=1$) & LGCN3 ($\hat{\ell}=2$) \\ \hline
2  & 0.2899 & \textbf{0.2938} & \textbf{0.1062} & 0.1059         \\
5  & 0.2557 & \textbf{0.2562} & \textbf{0.1341} & 0.1333          \\
10  & \textbf{0.2511} & 0.2444 & \textbf{0.1436} & 0.1425          \\
20  & \textbf{0.2604} & 0.2557 & 0.1412          & \textbf{0.1416} \\
50  & \textbf{0.2988} & 0.2941 & 0.1207          & \textbf{0.1213} \\
100 & \textbf{0.3349} & 0.3327 & 0.0966          & \textbf{0.0968} \\ \hline
\end{tabular}
\vspace{-4mm}
\end{small}
\end{table}
}

\subsection{Customer Segmentation (RQ3)}\label{app:d}
\yimin{
In this set of experiments, we evaluate the quality of personas generated by \ours against existing explicit customer representations in customer segmentation, which is another e-commerce application introduced in Section~\ref{sec:app}.
Specifically, the quality is measured by (i) their robustness over time and (ii) their clustering quality.
}

\stitle{Robustness}
\yimin{
To evaluate the robustness over time, we first randomly sample 300 customers from the \mba dataset with more than 10 transactions over a year and then divide their transaction data into the first and second six-month periods.
We include the well-adopted RFM model~\cite{wan2022fast} in customer segmentation, which calculates users' Recency, Frequency, and Monetary values as their representations.
For persona representation, we apply the LLM (GPT-4-Turbo) to select the three most dominant personas for each user from the previously generated 20-persona set. To ensure a fair comparison, each dimension of the RFM representation is discretized into 10 quantile-based baskets.
We call a user a consistent customer if their representation does not change between the two periods. We use the number and fraction of consistent customers to measure the representation's robustness over time, with the results shown in Table~\ref{tbl:robust}. We can observe that the fraction of consistent customers for persona representation is $13.8\times$ higher than the RFM model, demonstrating its superior robustness.
}

\stitle{Cluster quality}
\yimin{
To demonstrate the effectiveness of the persona in customer segmentation, we compare it with the competing RFM model. Specifically, we encode the persona representation using one-hot encoding and reduce it to three dimensions via PCA~\cite{mackiewicz1993principal} to align with the RFM, with both representations L2-normalized.
Using the user representations from the first six months of data in \mba, we perform K-means clustering with varying numbers of clusters \{5, 15, 25, 35\} and evaluate clustering quality using the Silhouette Score~\cite{rousseeuw1987silhouettes}. As demonstrated in Table~\ref{tbl:c_segmentation}, the proposed persona representation outperforms the state-of-the-art competitor in the customer segmentation performance by an average of 61.3\%.
}

\subsection{Approximate Solution Evaluation (RQ4)}\label{app:b}

\begin{table}[!t]
    \centering
    \small
    \caption{Robustness on \mba.}\vspace{-4mm}
    \begin{tabular}{c|c}
        \hline
        \textbf{Method} & \textbf{\# of Consistent Customers} \\ \hline
        RFM           & 4 (1.3\%)                          \\ \hline
        Persona                        & 54 (18\%)                          \\ \hline
    \end{tabular}
    \label{tbl:robust}
\end{table}
\begin{table}[!t]
    \centering
    \small
    \caption{Silhouette score on \mba with varying Clusters\# (Larger is better).}\vspace{-3mm}
    \begin{tabular}{c|cccc}
    \hline
    \textbf{Clusters\#}     & 5              & 15             & 25             & 35             \\  \hline
    \textbf{RFM}     & 0.366          & 0.404          & 0.431          & 0.445          \\ \hline
    \textbf{Persona} & \textbf{0.451} & \textbf{0.671}  & \textbf{0.771} & \textbf{0.788} \\ \hline
    \end{tabular}
    \label{tbl:c_segmentation}
\end{table}
\begin{table}
    \centering
    \small
    \caption{\rrw on \insfull with varying $\err$.}\vspace{-3mm}
    \label{tbl:time-error-tradeoff}
     \begin{tabular}{c|cccc}
    \hline
    \textbf{$\epsilon$} & 0.0   & 0.02 & 0.05  & 0.1  \\ \hline
    \textbf{AAE (e-3)}  & 0.0   & 0.92 & 1.59  & 1.73 \\ \hline
    \textbf{Time (s)}   & 268.8 & 77.7 & 49.79 & 47.86 \\ \hline
\end{tabular}
\vspace{-2mm}
\end{table}
For large-scale e-commerce datasets such as \insfull, the original exact solution based on the matrix computation in Eq.~\eqref{eq:exactaff} is impractical. The intermediate product $\hat{\mathbf{P}}=\Qm \cdot \Qmm$ can easily exceed the RAM limits. For instance, with over 200 thousand users in the \insfull dataset, the constructed $\hat{\mathbf{P}}$ contains more than $40$ billion floating point numbers, which need $160$GB storage. To evaluate the empirical efficiency and accuracy of the proposed \rrw method, we compute the user-persona affinities with various error tolerance settings ranging from $\err=0$ to $\err=0.2$. We execute \rrw five times for each setting and record the average running time. The average absolute error (AAE) between each user's approximated user-persona affinities and the exact solution is calculated to represent the empirical error, defined as: 
$AAE(u_i)=\frac{1}{|\PP|}\sum_{\prs_m}|\Aff[u_i,\prs_m] - \hat{\Aff}[u_i, \prs_m]|$. To simplify the comparison, we remove the de-bias coefficients in Eq.~\eqref{eq:aff}. Table~\ref{tbl:time-error-tradeoff} reports the results, showing that empirical error is significantly smaller than the theoretical error tolerance $\err$. For example, with $\err=0.02$, the empirical AAE is only $0.00092$, about 20 times smaller. Regarding the time cost, computing exact solutions with \rrw ($\err=0$) takes around $268$ seconds. While approximating the affinities with an empirical AAE smaller than $2\times 10^{-3}$ takes about $51$ seconds, which is more than five times faster. This demonstrates the efficiency of our proposed \rrw method on large-scale datasets.

\section{Conclusion}
In this work, we introduce the concept of customer personas and propose an explicit customer representation through these personas. We then present \ours, an effective and efficient method for generating customer personas for e-commerce applications.
By leveraging random walks to infer customer personas from a small prototype user set, our approach significantly reduces the cost of full LLM labeling. Additionally, the proposed \rrw algorithm enables fast user-persona affinity computations, easily handling large-scale datasets in seconds. Integrating persona representations with state-of-the-art recommendation models, we demonstrate the superiority of our solution, which is also shown in customer segmentation.
For future work, we plan to explore the effectiveness of our method in further e-commerce applications and a wider range of user interaction scenarios.

\begin{acks}
This research is supported by the Ministry of Education, Singapore, under its MOE AcRF TIER 3 Grant (MOE-MOET32022-0001). 
\end{acks}

\bibliographystyle{ACM-Reference-Format}
\balance
\bibliography{ref}

\twocolumn[
\appendix
\section{Appendix}
]
\appendix

\section{Proof of Theorems}\label{app:a}
\subsection{Proof of Theorem~\ref{thm:1}}
\setcounter{theorem}{0}
\kdd{\begin{theorem}
The time complexity for computing the exact $\Aff$ by Eq.~\eqref{eq:aff} is $O(|E|\cdot|\U| + (\Stp-1)|\U|^3 + |\PP|\cdot|\U|^2)$.
\end{theorem}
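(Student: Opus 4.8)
The plan is to decompose the computation of $\Aff=\Att\cdot\Lbl$ in Eq.~\eqref{eq:aff} into three stages and bound each separately: (i) forming the one-step user-to-user transition matrix $\Qm\Qmm$ from the sparse bipartite transition matrices; (ii) raising it to powers up to $\Stp$ and accumulating $\Att=\frac{1}{\Stp}\sum_{\stp\leq\Stp}(\Qm\Qmm)^{\stp}$ as in Eq.~\eqref{eq:exactaff}; and (iii) multiplying the resulting $|\U|\times|\U|$ matrix $\Att$ by the label matrix $\Lbl$. I would also note up front that assembling $\Lbl$ itself, including the de-bias coefficients $c_m$ and the distribution $\Q$ in Eq.~\eqref{eq:labelmatrix}, costs only $O(|\U|\cdot|\PP|)$ since it is a single pass over the entries of $\Pa$, and is therefore dominated by the later terms.

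For stage (i), the key observation is that $\Qm$ and $\Qmm$ each contain exactly $|\E|$ nonzeros, one per purchase edge. Computing $(\Qm\Qmm)[u_i,u_k]=\sum_{v_j}\Qm[u_i,v_j]\,\Qmm[v_j,u_k]$ by iterating over edges incurs total work $\sum_{v_j\in\V}\Deg_j^2$, where $\Deg_j$ is the degree of product $v_j$. I would then bound $\sum_{v_j}\Deg_j^2\le\big(\max_j\Deg_j\big)\sum_{v_j}\Deg_j\le|\U|\cdot|\E|$, using $\max_j\Deg_j\le|\U|$ and $\sum_{v_j}\Deg_j=|\E|$, which yields the first term $O(|\E|\cdot|\U|)$.

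For stage (ii), I would emphasize that $\Qm\Qmm$ is in general a dense $|\U|\times|\U|$ matrix, so each successive power $(\Qm\Qmm)^{\stp}$ obtained from $(\Qm\Qmm)^{\stp-1}$ is a full dense square matrix multiplication costing $O(|\U|^3)$. Since the first power is already available from stage (i), we need powers $2,\dots,\Stp$, contributing $(\Stp-1)$ products and hence $O\big((\Stp-1)|\U|^3\big)$; summing the $\Stp$ matrices to form $\Att$ adds only $O(\Stp|\U|^2)$, which is dominated. Stage (iii), the product of the $|\U|\times|\U|$ matrix $\Att$ with the $|\U|\times|\PP|$ matrix $\Lbl$, costs $O(|\PP|\cdot|\U|^2)$, the third term. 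Adding the three stages gives the claimed bound.

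The main subtlety is stage (i): justifying the $O(|\E|\cdot|\U|)$ figure through the $\sum_{v_j}\Deg_j^2$ bound and being explicit that this is a sparse-times-sparse product, in contrast to stages (ii)--(iii) where the matrices are dense and the cubic and quadratic costs are unavoidable. I would also make clear that only the first product benefits from sparsity, which is precisely why the exponent jumps to $|\U|^3$ for the remaining powers and motivates the fast approximation developed in Section~\ref{sec:ReverseRandomWalk}.
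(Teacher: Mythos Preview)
Your proposal is correct and follows the same three-stage decomposition as the paper: the sparse product $\Qm\Qmm$, then $(\Stp-1)$ dense $|\U|\times|\U|$ multiplications, then the dense product with $\Lbl$. The only minor difference is in stage~(i): the paper bounds each output entry by $O(d_i+d_j)$ via a sparse dot-product argument over user degrees and sums to $O(|\U|\cdot|\E|)$, whereas you use the column--row outer-product count $\sum_{v_j}\Deg_j^2\le(\max_j\Deg_j)\sum_j\Deg_j\le|\U|\cdot|\E|$; both yield the same bound.
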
}
\begin{proof}
First, we need to compute $\Qm\cdot \Qmm$ in Eq.~\eqref{eq:exactaff} where both $\Qm$ and $\Qmm$ are sparse matrices. We have $\Deg_i = nnz(\Qm[u_i,:]) = nnz(\Qmm[:,u_i])$ and $\sum_{i\in \U} d_i = |E|$. We calculate the computational complexity of this step by summing up the computations needed for each of its output elements as follows:
\begin{align*}
    \sum_{i\in \U}\sum_{j\in \U} O(d_i + d_j) &=\sum_{i\in \U}O\left(|U|\cdot d_i + \sum_{j\in \U}d_j \right) \\
    &=\sum_{i\in \U}O(|\U|\cdot d_i + |E|)\\
    &=O\left(|\U|\cdot \sum_{i\in \U}d_i + |\U|\cdot |E|\right)\\
    &= O(|U|\cdot |E|)
\end{align*}
We now focus on the computation of $\Att_{\Stp}$, as it naturally provides the values for all $\Att_\stp$ with $\stp < \Stp$, as well as for $\Att$. Denote $\Qm \cdot \Qmm$ as $\hat{\mathbf{P}}$, the first step involves computing $\hat{\mathbf{P}}^\stp$. Since $\hat{\mathbf{P}}\in \mathbb{R}^{|\U|\times|\U|}$ is a dense matrix, the computational complexity of this step is $O((\Stp-1)|\U|^3)$. For the computation of user-persona affinities in Eq.~\eqref{eq:aff}, \kdd{we apply $\mathbf{\Pi} \cdot \Lbl$ where $\mathbf{\Pi}$ is also a dense matrix.} Assuming that each user's associated persona number is proportional to $|\PP|$, the $\Lbl$ is also dense. Consequently, the computational complexity of this step is $O(|\PP|\cdot |\U|^2)$. Summing up these complexities derives that the overall algorithm computational complexity is $O(|E|\cdot|\U| + (\stp-1)|\U|^3 + |\PP|\cdot|\U|^2)$. 
\end{proof}

\subsection{Proof of Theorem~\ref{thm:2}}
\setcounter{theorem}{2}
\begin{theorem}
    Given a persona $\prs_m$ and an error $\err$, the $\s[:,\prs_m]$ generated by \rrw is an $\err$-approximate user-persona affinity.
\end{theorem}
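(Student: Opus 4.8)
The plan is to track, throughout the execution of \rrw, the exact discrepancy between the running estimate $\hat{\Aff}_t[\w,\prs_m]$ and the exact per-hop value $\Aff_t[\w,\prs_m]$ defined by the recurrence, and to show that this discrepancy is always fully accounted for by the residual vectors $\p_{t'}$. First I would rewrite the recurrence in matrix form: let $\mathbf{W}\in\mathbb{R}^{\NN\times\NN}$ be the one-hop transition matrix on $\U\cup\V$ with $\mathbf{W}[\w,\ww]=\frac{1}{|\nei{\w}|}$ if $\ww\in\nei{\w}$ and $0$ otherwise. Then $\mathbf{W}$ is row-stochastic and the recurrence reads $\Aff_t[:,\prs_m]=\mathbf{W}^{t}\,\Lbl[:,\prs_m]$, so in particular $(\mathbf{W}^{k})[\w,v]\ge 0$ and $\sum_{v}(\mathbf{W}^{k})[\w,v]=1$ for every $\w$ and every $k\ge 0$.

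The central object is the invariant
\begin{equation*}
\Aff_t[\w,\prs_m]=\hat{\Aff}_t[\w,\prs_m]+\sum_{t'=0}^{t-1}\sum_{v\in\U\cup\V}\p_{t'}[v]\,(\mathbf{W}^{\,t-t'})[\w,v],
\end{equation*}
which I claim holds for every node $\w$ and every hop $t\in\{0,\dots,2\Stp\}$ at the start of each while-loop iteration. I would first check it at initialization (Lines 1--2): for $t=0$ the residual sum is empty and $\hat{\Aff}_0=\Lbl[:,\prs_m]=\Aff_0[:,\prs_m]$, while for $t\ge 1$ only $\p_0=\Lbl[:,\prs_m]$ is nonzero and the sum collapses to $(\mathbf{W}^{t}\Lbl)[\w,\prs_m]=\Aff_t[\w,\prs_m]$. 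Then I would show each push preserves it. Fixing the pushed pair $(\ww_0,\hat t)$ with mass $\rho=\p_{\hat t}[\ww_0]$, the only changes are $\p_{\hat t}[\ww_0]\!\to\!0$ and increments of $\rho/|\nei{\w_1}|$ to both $\hat{\Aff}_{\hat t+1}[\w_1]$ and $\p_{\hat t+1}[\w_1]$ for each $\w_1\in\nei{\ww_0}$; I would verify that the net change of the right-hand side is zero by splitting on $t\le\hat t$ (nothing changes), on $t=\hat t+1$ (the gain $\rho\,\mathbf{W}[\w,\ww_0]$ in $\hat{\Aff}_t$ cancels the lost residual term $\rho\,\mathbf{W}[\w,\ww_0]$), and on $t\ge\hat t+2$ (the lost term $\rho\,(\mathbf{W}^{t-\hat t})[\w,\ww_0]$ is exactly recreated by the new residuals via the one-step identity $(\mathbf{W}^{k})[\w,\ww_0]=\sum_{\w_1}(\mathbf{W}^{k-1})[\w,\w_1]\,\mathbf{W}[\w_1,\ww_0]$).

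Granting the invariant, the error bound follows quickly. Since all pushes add nonnegative quantities and $\Lbl\ge 0$, every residual stays nonnegative, so upon termination each one obeys $\p_{t'}[v]<\err/2\Stp$. Substituting into the invariant at hop $2\stp$ and using the row-sum identity,
\begin{equation*}
\bigl|\Aff_{2\stp}[\w,\prs_m]-\hat{\Aff}_{2\stp}[\w,\prs_m]\bigr|
=\sum_{t'=0}^{2\stp-1}\sum_{v}\p_{t'}[v]\,(\mathbf{W}^{\,2\stp-t'})[\w,v]
<\frac{\err}{2\Stp}\cdot 2\stp=\frac{\stp\,\err}{\Stp}.
\end{equation*}
Averaging over $\stp=1,\dots,\Stp$ and recalling that $\Aff[:,\prs_m]=\frac{1}{\Stp}\sum_{\stp\le\Stp}\Aff_{2\stp}[:,\prs_m]$ while $\hat{\Aff}[:,\prs_m]=\frac{1}{\Stp}\sum_{\stp\le\Stp}\hat{\Aff}_{2\stp}[:,\prs_m]$, the triangle inequality gives $\bigl|\Aff[\w,\prs_m]-\hat{\Aff}[\w,\prs_m]\bigr|\le\frac{1}{\Stp}\sum_{\stp=1}^{\Stp}\frac{\stp\,\err}{\Stp}=\frac{(\Stp+1)\err}{2\Stp}\le\err$ for every $\Stp\ge1$, which is exactly the requirement of Definition~\ref{def:e-approximate}.

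I expect the main obstacle to be the push-preservation step: getting the hop indices and the direction of $\mathbf{W}$ right across the three cases, and in particular verifying that the residual mass removed from $(\ww_0,\hat t)$ is redistributed to higher hops in exactly the proportions dictated by $\mathbf{W}^{k}=\mathbf{W}^{k-1}\mathbf{W}$. Everything afterwards is a nonnegativity-plus-row-stochasticity estimate, and the only design choice that matters is that the threshold $\err/2\Stp$, combined with the at-most-$2\Stp$ hops that contribute to any $\hat{\Aff}_{2\stp}$, yields a total averaged error of at most $\err$.
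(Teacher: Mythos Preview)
Your proof is correct. Both your argument and the paper's arrive at the per-hop bound $|\Aff_{2\stp}[\w,\prs_m]-\hat{\Aff}_{2\stp}[\w,\prs_m]|\le 2\stp\cdot\err/(2\Stp)$ and then average over $\stp\le\Stp$, but the routes differ. The paper reasons by induction on the hop index at termination: it uses the implicit relation $\hat{\Aff}_t[\w]=\tfrac{1}{|\nei{\w}|}\sum_{\ww\in\nei{\w}}(\hat{\Aff}_{t-1}[\ww]-\p_{t-1}[\ww])$ (total mass pushed out of hop $t-1$ equals total mass received there minus the remaining residual) to obtain the recurrence $\Err_t\le\Err_{t-1}+\err/(2\Stp)$. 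You instead maintain the explicit ``estimate plus residual'' invariant $\Aff_t=\hat{\Aff}_t+\sum_{t'<t}\mathbf{W}^{\,t-t'}\p_{t'}$ live throughout the run and verify that each push preserves it, in the style of the Andersen--Chung--Lang analysis of local push for PageRank. Your route is somewhat longer but more self-contained: it makes the bookkeeping between estimates and residuals fully explicit rather than relying on the reader to see why $\hat{\Aff}_t$ is exactly the one-hop propagation of $\hat{\Aff}_{t-1}-\p_{t-1}$. The paper's induction is more economical once that relation is granted. Both land on the same final constant $\tfrac{(\Stp+1)}{2\Stp}\,\err\le\err$.
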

\begin{proof}
    We denote $s_t(\w, \prs_m) = \sum_{u_k\in \Ul} \tp_t(\w,u_k)\cdot \Lbl[u_k,\prs_m]$ 
    and $\Err_t=\max_{w\in\U\cup\V} |s_t(\w,\prs_m)-\s_t[\w,\prs_m]|$ with the maximizer $\w_t^*$.
    For $t=0$, by definition we have
    \begin{align*}
        \Err_0 &= |s_0(\w_0^*,\prs_m)-\s_0[\w_0^*,\prs_m]|\\
         &=|\tp_0(\w_0^*,\w_0^*)\cdot \Lbl[\w_0^*,\prs_m]-\Lbl[\w_0^*,\prs_m]|=0
    \end{align*}
    For $t=1$, we bound $\Err_1$ as the following:
    \begin{align*}
        \Err_1 &= |s_1(\w_1^*, \prs_m) - \s_1[w^*_1,\prs_m]|\\
        &= \frac{1}{|\nei{\w^*_1}|} \sum_{\w\in \nei{\w^*_1}}  |s_0(\w, \prs_m)-\s_0[\w,\prs_m]+\p_0(\w)|\\
        &\leq \frac{1}{|\nei{\w^*_1}|} \sum_{\w\in \nei{\w^*_1}}  |s_0(\w, \prs_m)-\s_0[\w,\prs_m]|+|\p_0(\w)|\\
        &\leq \frac{1}{|\nei{\w^*_1}|} \sum_{\w\in \nei{\w^*_1}} \left( \Err_0+\frac{\err}{2\Stp} \right) = \frac{\err}{2\Stp}
    \end{align*}
    Similarly, we have the following bound for $\Err_2$:
    \begin{align*}
        \Err_2 \leq \frac{1}{|\nei{\w_2^*}|} \sum_{\w\in \nei{\w_2^*}} \left( \Err_1+\frac{\err}{2\Stp}  \right) \leq \frac{2\err}{2\Stp}
    \end{align*}
    Inductively, we have the following bound for $\Err_{2\Stp}$:
    $$\Err_{2\Stp} \leq \frac{1}{|\nei{\w_{2\Stp}^*}|} \sum_{\w\in \nei{\w_{2\Stp}^*}} \left( \frac{{2\Stp}\err}{2\Stp} \right)=\err$$
    Finally, for any $u_i \in \U $ we have the following bound on the estimation error:
    \begin{align*}
        & \left|\Aff[u_i, \prs_m] - \frac{1}{\Stp} \sum\limits_{\stp\leq \Stp}\s_{2\stp}[u_i, \prs_m]\right| \\
        = & \left|\frac{1}{\Stp} \sum\limits_{\stp \leq \Stp} s_{2\stp}(u_i, \prs_m)- \frac{1}{\Stp} \sum\limits_{\stp\leq \Stp}\s_{2\stp}[u_i, \prs_m]\right| \\
        \leq & \frac{1}{\Stp} \sum\limits_{\stp \leq \Stp} \Err_{2\stp}\\
        \leq & \err
    \end{align*}
    
\end{proof}

\subsection{Proof of Theorem~\ref{thm:3}}
\setcounter{theorem}{3}
\begin{theorem}
    Given a graph $\G$, a persona set $\PP$, and an error $\err$, the time complexity of estimating $\Aff$ by invoking \rrw from every $r_m\in \PP$ is $O\left(\frac{1}{\err}\Stp^2\left(\NN\log(\Stp\NN)+|E|\right)\right)$, where $N=|U|+|V|$.
\end{theorem}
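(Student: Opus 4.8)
The plan is an amortized analysis of the while-loop of Algorithm~\ref{alg:rrw}, run once per persona, combined with a summation over $\PP$ that is performed \emph{before} bounding the input size so as to avoid an explicit factor of $|\PP|$. I would organize everything around a degree-weighted potential on the residual vectors $\p_t$, namely $\Phi=\sum_{t}\sum_{w}\p_t[w]\,|\nei{w}|$. The first and central step is to identify what a single push conserves. When the algorithm processes $(\ww,t)$ it deletes $\p_t[\ww]$ and adds $\Delta=\p_t[\ww]/|\nei{\w}|$ to $\p_{t+1}[\w]$ for every $\w\in\nei{\ww}$; weighting each entry by its node degree, the degree-weighted mass removed from level $t$ equals $\p_t[\ww]\,|\nei{\ww}|$, while the degree-weighted mass deposited at level $t+1$ is $\sum_{\w\in\nei{\ww}}\tfrac{\p_t[\ww]}{|\nei{\w}|}\,|\nei{\w}|=|\nei{\ww}|\,\p_t[\ww]$. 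Thus every push conserves degree-weighted mass as it advances it one hop. Since there are only $2\Stp$ hops, the total degree-weighted mass ever pushed is at most $2\Stp\,\Phi_0$, where $\Phi_0=\sum_{u_k\in\Ul}\Lbl[u_k,\prs_m]\,|\nei{u_k}|$ is the initial potential.

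Second, I would charge the cost of each push to the degree-weighted mass it removes. A push occurs only when $\p_t[\ww]\ge \err/2\Stp$, so it removes degree-weighted mass at least $\tfrac{\err}{2\Stp}|\nei{\ww}|$ while costing $O(|\nei{\ww}|)$; hence its cost is at most $\tfrac{2\Stp}{\err}$ times the mass it removes. Summing over all pushes and invoking the conservation bound from the first step yields total push work $O\!\big(\tfrac{\Stp^2}{\err}\Phi_0\big)$ for a single persona.

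Third, I would eliminate the factor $|\PP|$ by summing $\Phi_0$ over personas before estimating it. Using $c_m\le 1$ we have $\sum_{\prs_m}\Lbl[u_k,\prs_m]=\sum_{\prs_m}c_m\tfrac{\Pa[u_k,\prs_m]}{\sum_n \Pa[u_k,\prs_n]}\le 1$, so $\sum_{\prs_m}\Phi_0^{(m)}\le\sum_{u_k\in\Ul}|\nei{u_k}|\le |E|$, the last inequality holding because these are user degrees in a bipartite graph. This collapses the per-persona potentials into a single $|E|$ and gives the additive term $O\!\big(\tfrac{\Stp^2}{\err}|E|\big)$, matching the claimed $|E|$ contribution.

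Finally, I would account for repeatedly selecting the maximal residual in Line~4. The residuals occupy at most $2\Stp\NN$ cells, so holding them in a priority queue makes each selection and key change cost $O(\log(\Stp\NN))$, which is the source of the $\NN\log(\Stp\NN)$ term. The main obstacle is precisely this last piece: one must argue that the logarithm attaches only to the $\NN$ factor and not to $|E|$. I would attempt this by using a heap that supports $O(1)$ key-increase, so that the $O(\tfrac{\Stp^2}{\err}|E|)$ key updates incurred during pushes stay logarithm-free and fold into the $|E|$ term, leaving only the extract-max operations to pay $O(\log(\Stp\NN))$ each; bounding the number of these extractions so their total is $O\!\big(\tfrac{\Stp^2}{\err}\NN\log(\Stp\NN)\big)$ is the delicate accounting step. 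The degree-weighted conservation identity and the cross-persona mass bound are the crux of the argument, while the threshold charging and the final addition of the two terms are routine.
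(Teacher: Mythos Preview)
Your degree-weighted potential is correct and elegant for the $|E|$ term, and it is a different route from the paper's. The paper never introduces a potential; instead it argues that the total unweighted residual ever deposited into a cell $(w,t)$ cannot exceed the exact value $s_t(w,\prs_m)=\sum_{u_k\in\Ul}\tp_t(w,u_k)\,\Lbl[u_k,\prs_m]$, so $(w,t)$ is popped at most $s_t(w,\prs_m)/(\err/2\Stp)$ times. Multiplying each pop by $O(\log(2\Stp\NN)+d_w)$, summing over $w,t$, and---as you also do---summing over $\prs_m$ first so that $\sum_{\prs_m}\Lbl[u_k,\prs_m]\le 1$ combines with the walk's row-stochasticity $\sum_{u_k}\tp_t(w,u_k)\le 1$, one obtains $\sum_{\prs_m}\sum_t s_t(w,\prs_m)\le 2\Stp$ for \emph{every} node $w$. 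This gives $O(\tfrac{\Stp^2}{\err}\NN)$ total pops across all personas, and the $\log$ and $d_w$ parts of the per-pop cost then produce the $\NN\log(\Stp\NN)$ and $|E|$ contributions simultaneously. So the paper gets both terms from one count, whereas your conservation identity delivers only the $|E|$ term.

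The step you call ``delicate'' is a genuine gap. Your potential alone cannot give the required pop-count bound $O(\tfrac{\Stp^2}{\err}\NN)$: a pop advances degree-weighted mass at least $\tfrac{\err}{2\Stp}\,d_{\min}$, and with $d_{\min}=1$ this only yields $O(\tfrac{\Stp^2}{\err}|E|)$ extract-max calls, hence the weaker $O(\tfrac{\Stp^2}{\err}|E|\log(\Stp\NN))$. The Fibonacci-heap observation is apt---the paper's per-pop cost $O(\log+d_w)$ also tacitly needs $O(1)$ key-increase---but it does not close the gap; you still need the number of extract-max calls to be $O(\tfrac{\Stp^2}{\err}\NN)$, and that requires an ingredient outside your conservation framework: the cumulative inflow at $(w,t)$ is dominated by the exact affinity $s_t(w,\prs_m)$, whose per-node sum is controlled by the stochasticity of the walk rather than by edge counts.
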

\begin{proof}
    Denote $c^* = \max_m c_m$.
    During the execution for a specific persona $\prs_m$, the maximal time to pop a node $w$ from \PQ for its $t$-th reverse updating is $\sum_{u_k\in\Ul}(\tp_t(\w,u_k)\cdot\Lbl[u_k,\prs_m])/(\err/{(2\Stp)})$. Because each time we process on $w$ will refresh its priority by at least $\err/{(2\Stp)}$, and its overall gained priority will not exceed the numerator term. In each process of $\w$, the time to update $\s_t[:,\prs_m]$ is $\Deg_{w}$ steps since each of its in-neighbors receives some of its updates. The maintenance of \PQ (poping and updating) each time costs $O(\log({2\Stp} \NN))$ work. Thus the running time for a single $w$ for updating all its hops is less than the following.
    \begin{equation*}
      \frac{1}{\err/{(2\Stp)}}\cdot \sum_t\sum_{u_k\in\Ul}(\tp_t(\w,u_k)\cdot \Lbl[w,\prs_m])\cdot O(\log({2\Stp}\NN)+\Deg_{\w}) 
    \end{equation*}
    Then for all personas and all nodes, the overall cost can be bounded as the following:
    \begin{align*}
        &\frac{1}{\err/{(2\Stp)}}\sum_{\prs_m\in \PP}\sum_{\w\in \U\cup\V} \sum_t\sum_{u_k\in\Ul}(\tp_t(\w,u_k)\cdot \Lbl[\w,\prs_m])\cdot O(\log({2\Stp}\NN)+\Deg_{\w})\\
        =&\frac{1}{\err/{(2\Stp)}}\sum_{\w\in\U\cup\V}\left(\sum_t\sum_{u_k\in\Ul}\tp_t(\w,u_k)\sum_{\prs_m\in\PP} \Lbl[\w,\prs_m]\right)\cdot O(\log({2\Stp}\NN)+\Deg_{\w})\\
        \leq&\frac{1}{\err/{(2\Stp)}}\sum_{\w\in\U\cup\V}2\Stp\cdot c^*\cdot O(\log({2\Stp}\NN)+\Deg_{\w})\\
        =&\frac{c^*4\Stp^2}{\err}\left(\sum_{\w\in\U\cup\V} O(\log({2\Stp}\NN)) + \sum_{\w\in\U\cup\V}O(\Deg_{\w})\right)\\
        =&\frac{c^*4\Stp^2}{\err} \left(\NN \cdot O(\log({2\Stp} \NN)) + O(|E|)\right)\\
        =& O\left(\frac{c^*\Stp^2}{\err}\left(\NN\log({2\Stp}\NN)+|E|\right)\right)\\
    \end{align*}
    Since $c^*\leq1$, we have the eventual computational complexity $O\left(\frac{1}{\err}\Stp^2\left(\NN\log(\Stp\NN)+|E|\right)\right)$.
\end{proof}

\renewcommand{\thefigure}{2.\alph{figure}}
\setcounter{figure}{0}

\begin{figure*}[b!]
\centering
\begin{tcolorbox}[
    enhanced,
    title=Market Basket Analysis (MBA),
    frame style image=blueshade.png,
    colback=yellow!10!white,
    fonttitle=\bfseries,
    overlay={\node[anchor=south east,font=\bfseries\itshape\huge,xshift=-5mm,yshift=3mm] at (frame.south east){Step 1};}
    ]

\textbf{System Prompt:} You are an assistant skilled at summarizing, capable of deducing high-level consumer keywords based on a user's purchases.

\textbf{User Prompt:} Take a deep breath and work according to the instructions step by step. Now you will conduct a series of analyses on the Market Basket Analysis (MBA) dataset. This dataset contains data from a retailer, where each user’s purchasing transactions and the bought items are recorded. From this dataset, I will provide you the purchase information from about 100 (2.5 percent) users, per-user’s purchasing data has been grouped by their ID and transferred to a natural language description for your better understanding of their purchasing behaviors.

Your task is to generate representative and accurate 20 user personas according to these users’ purchasing patterns. Please notice that we give you 2 important targets you should consider and optimize:
\begin{itemize}
    \item High Coverage: 
    We hope that your generated persona set can cover as many users as possible. We define the ‘coverage’ as the total number of the users which can be labeled with at least one of your generated persona set.

    \item High Accuracy: 
    We hope that each of your generated persona has a precise definition. An ambiguous or sweeping persona definition should be avoided.
\end{itemize}

Repeat your task one more time, the goal is to generate a proper set of 20 representative and accurate user personas existing in the data subset and explain them quantitatively. 

For each persona, you should write a corresponding definition, an output example:

\begin{verbatim}
1. Home Comforts Enthusiast
    - Buys items focused on creating a cozy and inviting home atmosphere, 
      such as wicker hearts, chalkboards, vintage decorative pieces, and heart-shaped ornaments.
2. Craft and DIY Hobbyist
    - Often purchases crafting materials, DIY kits, sewing items, plush toys, 
      and bespoke stationery sets for personal projects or to entertain children.
... (20 personas in total)
\end{verbatim}

Now considering the user purchasing data given below: ...

\end{tcolorbox}
\vspace{-10pt}
\caption{Case study on initial persona set generation (Take MBA as an example) - Step 1.}
\label{fig:persona-gen-step1}
\end{figure*}

\section{Details of Persona Generation and Labelling}\label{app:b}
\stitle{Persona generation}
To establish a comprehensive and representative set of customer personas, we leverage LLMs with prompt engineering techniques. Using the \mba dataset as an example, we provide the templates in Figure~\ref{fig:persona-gen-step1},~\ref{fig:persona-gen-step2} and~\ref{fig:persona-gen-step3}. In the first step, we extract the purchase histories from 100 randomly selected customers and instruct the LLM to generate 20 persona candidates, each with a description. We then repeat this process 40 times to create 40 sets of persona candidates. In the second step, we sample 5 of these 40 sets and prompt the LLM to provide an improved set of 20 personas. This process is repeated 8 times, producing 160 polished persona candidates. Finally, we feed these candidates into the third template to obtain the final set of 20 personas.

\stitle{Persona labeling}
To illustrate the LLM-based persona labeling, we present the detailed prompts in Figures~\ref{fig:persona-label-instruction},~\ref{fig:persona-label-input}, and~\ref{fig:persona-label-result}. First, we instruct the LLM to specify the requirements and format for this task. Next, for each prototype user, we serialize their historically purchased items and purchase frequencies into natural language and input them into the template. Notably, this template reminds the LLM to select personas only from the provided list, resulting in the persona representations shown in the last subfigure.
To evaluate labeling consistency, we perform three independent rounds of labeling on customers in \mba using GPT-4-Turbo with the default temperature and the same persona set. On average, 33\% of a user’s assigned personas appear in all three runs, while 42\% appear only once. This suggests that the labeling process contains noise, and more robust downstream algorithms may help better leverage these persona labels.

\begin{figure*}[b!]
    \centering
    \begin{tcolorbox}[enhanced,arc=3mm,frame style image=blueshade.png,colback=gray!5!white,
    overlay={\node[anchor=south east,font=\bfseries\itshape\huge,xshift=-5mm,yshift=1mm] at (frame.south east){Step 2};}]
    \textbf{System Prompt:} You are an assistant skilled at reading, observing and summarizing, capable of finding similar or repeated descriptions of user personas, and good at finding the most representative ones.

    \textbf{User Prompt:} Take a deep breath and work according to the instructions step by step. Now we have 40 persona\_sets, each containing 20 personas, and I will randomly select five persona\_sets, each containing 20 personas, for a total of 100 personas. Your task is to select the 20 most representative personas from these 100 personas and output the results. If you find that the content of a certain group is not 20 personas but less than 20, or even irrelevant information, you should ignore this group of information and only refer to personas in other groups. Note that you may find that the personas you read have some similarities or even some duplications. You need to find these similar or duplicate personas and select the 20 most representative personas accordingly. You should not refer to any information related to the number of occurrences in these personas, as this information is very likely to be unreasonable. You should ignore the occurrence times information and select the 20 most representative personas based only on their descriptions.

    Here are the five sets of results that make up the 100 personas you need to choose from: ...
    \end{tcolorbox}
\vspace{-10pt}
\caption{Case study on initial persona set generation (Take MBA as an example) - Step 2.}
\label{fig:persona-gen-step2}
\end{figure*}

\begin{figure*}[b!]
    \centering
    \begin{tcolorbox}[enhanced,arc=3mm,frame style image=blueshade.png,colback=blue!5!white,
    overlay={\node[anchor=south east,font=\bfseries\itshape\huge,xshift=-5mm,yshift=3mm] at (frame.south east){Step 3};}]
    \textbf{System Prompt:} You are an assistant skilled at reading, observing and summarizing, capable of finding similar or repeated description of user personas, and good at finding the most representative ones.

    \textbf{User Prompt:} Take a deep breath and work according to the instructions step by step. Now we have 8 persona sets, each containing 20 personas, for a total of 160 personas. Your task is to select the 20 most representative personas from these 160 personas and output the results. Note that you may find that the personas you read have some similarities or even some duplications. You need to find these similar or duplicate personas and select the 20 most representative personas accordingly, that is to say, these 20 personas occurs most times and can cover most of them.

    Here are the eight sets of personas that make up the 160 personas you need to choose from: ...
    \end{tcolorbox}
\vspace{-10pt}
\caption{Case study on initial persona set generation (Take MBA as an example) - Step 3.}
\label{fig:persona-gen-step3}
\end{figure*}

\renewcommand{\thefigure}{3.\alph{figure}}
\setcounter{figure}{0}

\vspace{-10pt}
\begin{figure*}[b!]
    \centering
    \begin{tcolorbox}[enhanced,title=User 12358 in MBA, frame style image=blueshade.png,
        colback=yellow!10!white,fonttitle=\bfseries,
        overlay={\node[anchor=south east,font=\bfseries\itshape\huge,xshift=-5mm,yshift=5mm] at (frame.south east){Instruction};}]
    \textbf{System Prompt}: Now you are an intelligent e-commerce domain assistant. You are skilled at summarizing, and capable of assigning high-level consumer personas based on a user's purchase behavior.
    
    \bigskip
    
    \textbf{User Prompt:} Take a deep breath and work according to the instructions step by step. 
    
    Your goal is to identify users' shopping behaviors based on products they have bought and label them with a given set of personas. You need to select at least one persona, at most 5 personas from our given persona list. But make sure that for each assignment you should find strong evidence in their purchase transactions. Please keep the procedure as accurate as possible.
    
    Please provide the output in json format. Prefer to return arrays instead of comma separated strings. The following is an explanation of your return format:
    \begin{verbatim}
    {"user_number": ["Persona1", "Persona2", "Persona3"]}
    {"user_number": ["Persona1", "Persona2", "Persona3"]}
    \end{verbatim}
    
    And here is a specific example:
    \begin{verbatim}
    {"12346": [ "Vegan/Vegetarian", "High-Protein Shopper", "Pet Owner"]}
    \end{verbatim}
    
    In the case that you feel there does not exist any suitable persona from the given list that can properly describe a user's purchasing behavior, you can label the user as an 'unrepresentable' user as the following example:
    \begin{verbatim}
    {"12999": ["Unrepresentable"]}
    \end{verbatim}
    \end{tcolorbox}
    \vspace{-10pt}
    \caption{Case study on user persona generation (take user 12358 in MBA as an example) - Instruction.}
    \label{fig:persona-label-instruction}
\end{figure*}

\begin{figure*}[b!]
    \centering
    \begin{tcolorbox}[enhanced,arc=3mm,frame style image=blueshade.png,colback=gray!5!white,
    overlay={\node[anchor=south east,font=\bfseries\itshape\huge,xshift=-5mm,yshift=2mm] at (frame.south east){Input Prompt};}]
    Here is the persona list you should choose from: \textbf{\textit{[PERSONA LIST]}}

    Remember that the user number (i.e., “user\_number” in the example) should be exactly from the given transaction data, do not make it wrong since it is crucial.

    Here is the data of user 12358's transaction data for you to analyze: 
    
    The user 12358 has totally purchased 13 unique products, we show each product name followed by its purchased times: he bought: FAIRY CAKE DESIGN UMBRELLA, 4 times; CERAMIC STRAWBERRY DESIGN MUG, 24 times; CERAMIC CAKE STAND + HANGING CAKES, 2 times; CERAMIC CAKE DESIGN SPOTTED PLATE, 12 times; DOORMAT FAIRY CAKE, 2 times; EDWARDIAN PARASOL PINK, 12 times; EDWARDIAN PARASOL NATURAL, 24 times; EDWARDIAN PARASOL RED, 24 times; EDWARDIAN PARASOL BLACK, 24 times; STRAWBERRY CERAMIC TRINKET BOX, 12 times; CERAMIC BOWL WITH STRAWBERRY DESIGN, 6 times; POSTAGE, 4 times; CERAMIC STRAWBERRY CAKE MONEY BANK, 36 times. Remind one more time that you can only select from the given 20 personas' list and only use the exactly given persona, you cannot use other words to describe. You do not need to explain how you get the result, so please respond no more than the required format.
    \end{tcolorbox}
    \vspace{-10pt}
    \caption{Case study on user persona generation (take user 12358 in MBA as an example) - Input Prompt.}
    \label{fig:persona-label-input}
\end{figure*}

\begin{figure*}[b!]
    \centering
    \begin{tcolorbox}[enhanced,arc=3mm,frame style image=blueshade.png,colback=blue!5!white,
    overlay={\node[anchor=south east,font=\bfseries\itshape\huge,xshift=-5mm,yshift=2mm] at (frame.south east){Generated Result};}]
    \begin{verbatim}
    {"12358": ["Home Decor Aficionado", "Vintage and Retro Enthusiast"]}
    \end{verbatim}
    \end{tcolorbox}
    \vspace{-10pt}
    \caption{Case study on user persona generation (take user 12358 in MBA as an example) - Generated Result.}
    \label{fig:persona-label-result}
\end{figure*}




\end{document}